\documentclass[a4paper,10pt]{article}
\usepackage[english]{babel}
\usepackage[utf8]{inputenc}
\usepackage{amsmath}
\usepackage{amssymb}
\usepackage[english]{babel}
\usepackage{amsthm}
\usepackage{stmaryrd}
\usepackage[justification=centering]{caption}
\usepackage{algorithm}
\usepackage[nottoc]{tocbibind}
\usepackage[inline]{asymptote}
\usepackage[utf8]{inputenc}
\usepackage{algorithm}
\usepackage{algpseudocode}
\usepackage{hyperref}
\usepackage{url}
\usepackage{authblk}
\usepackage{titling}
\usepackage{titlesec}
\usepackage[english]{babel}
\usepackage[autostyle, english = american]{csquotes}
\MakeOuterQuote{"}

\algnewcommand\algorithmicgenerate{\textbf{GENERATE}}
\algnewcommand\GENERATE[1]{\State\algorithmicgenerate\ #1}
\algnewcommand\algorithmicset{\textbf{SET}}
\algnewcommand\SET[1]{\State\algorithmicset\ #1}
\algrenewcommand\algorithmicfor{\textbf{for}} % <=======================
\algrenewcommand\algorithmicend{\textbf{end}} % <=======================
\algloop{FOR}

\DeclareMathOperator*{\argmin}{arg\,min}
\newtheorem{theorem}{Theorem}[section]
\newtheorem{definition}[theorem]{Definition}

\newtheorem{lemma}[theorem]{Lemma}
\newtheorem{remark}[theorem]{Remark}

\usepackage{mathtools}
\DeclarePairedDelimiter{\ceil}{\lceil}{\rceil}
%++++++++++++++++++++++++++++++++++++++++

\errorcontextlines\maxdimen

% begin vertical rule patch for algorithmicx (http://tex.stackexchange.com/questions/144840/vertical-loop-block-lines-in-algorithmicx-with-noend-option)
\makeatletter
% start with some helper code
% This is the vertical rule that is inserted
\newcommand*{\algrule}[1][\algorithmicindent]{\makebox[#1][l]{\hspace*{.5em}\vrule height .75\baselineskip depth .25\baselineskip}}%

\newcount\ALG@printindent@tempcnta
\def\ALG@printindent{%
    \ifnum \theALG@nested>0% is there anything to print
        \ifx\ALG@text\ALG@x@notext% is this an end group without any text?
            % do nothing
            \addvspace{-3pt}% FUDGE for cases where no text is shown, to make the rules line up
        \else
            \unskip
            % draw a rule for each indent level
            \ALG@printindent@tempcnta=1
            \loop
                \algrule[\csname ALG@ind@\the\ALG@printindent@tempcnta\endcsname]%
                \advance \ALG@printindent@tempcnta 1
            \ifnum \ALG@printindent@tempcnta<\numexpr\theALG@nested+1\relax% can't do <=, so add one to RHS and use < instead
            \repeat
        \fi
    \fi
    }%
\usepackage{etoolbox}
% the following line injects our new indent handling code in place of the default spacing
\patchcmd{\ALG@doentity}{\noindent\hskip\ALG@tlm}{\ALG@printindent}{}{\errmessage{failed to patch}}
\makeatother

\title{\textbf{Minimum Path Star Topology Algorithms for Weighted Regions and Obstacles}}
\author{
  King, Tyler\\
  \texttt{ttk22@cornell.edu}
  \and
  Soltys, Michael\\
  \texttt{michael.soltys@csuci.edu}
}

\begin{document}
\date{\today}
\maketitle
\section{Abstract}
Shortest path algorithms have played a key role in the past century, paving the way for modern day GPS systems to find optimal routes along static systems in fractions of a second. One application of these algorithms includes optimizing the total distance of power lines (specifically in star topological configurations). Due to the relevancy of discovering well-connected electrical systems in certain areas, finding a minimum path that is able to account for geological features would have far-reaching consequences in lowering the cost of electric power transmission. We initialize our research by proving the convex hull as an effective bounding mechanism for star topological minimum path algorithms. Building off this bounding, we propose novel algorithms to manage certain cases that lack existing methods (weighted regions and obstacles) by discretizing Euclidean space into squares and combining pre-existing algorithms that calculate local minimums that we believe have a possibility of being the absolute minimum. We further designate ways to evaluate iterations necessary to reach some level of accuracy. Both of these novel algorithms fulfill certain niches that past literature does not cover. 

\newpage

\section{Introduction}

Throughout this paper we deal with the drawbacks of Weiszfeld's algorithm, proposing alternative algorithms that deal with specific Euclidean spaces, including weighted regions and obstacles. It is known that algorithms such as the continuous Dijkstra paradigm \cite{cdp2,latexcompanion} and A* pathfinding \cite{monotone,Mousoulides2014} are effective at managing point-point shortest path problems, but are unable to be expanded to dealing with multiple nodes. By creating novel algorithms that can approximate minimums for star topology, we hope to have effects outside of the theoretical plane. In particular, we looked to connect several cities (viewed as nodes) to a central power grid (central node, obviated in following diagrams for simplicity). Ordinarily, Weiszfeld's algorithm could be successful in optimizing this task, however, it lacks the ability to deal with geological difficulties such as national parks (obstacles) or mountainous versus flat terrains (weighted regions where more weight could be given to more challenging terrain) \cite{WeiszfeldInfo}. 

We begin by proving the bounding of Weiszfeld's algorithm as an introductory proof, establishing the tone for following proofs. We complete a proof by contradiction that shows any point $q$ outside of the convex hull can be reflected over the line formed by the edge of the convex hull to a new point $q^\prime$ (reference figure 1). From here, $q^\prime := q$, and continuously reiterating through this pattern leads to a point inside the convex hull. Each reflection moves towards the set of points $p_1$, $p_2$,..., $p_N$, thus proving that all points outside the convex hull have some reflected point inside the convex hull that is closer to the set of points.

\begin{figure}[h]
\begin{center}
\includegraphics[scale=.3]{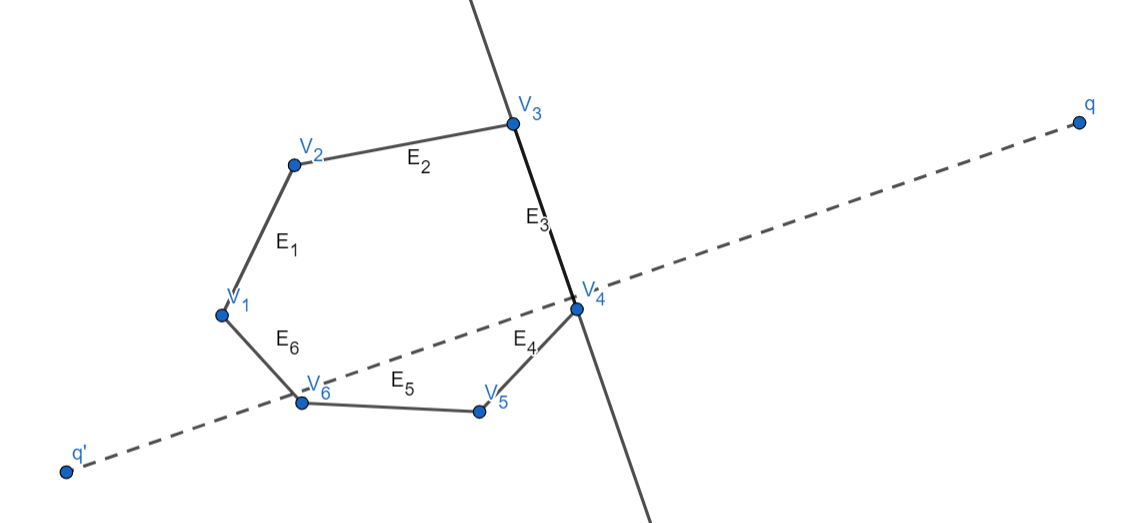}
\caption{Point $q$ mirrored to point $q^\prime$ over line $o$ which extends out from line segment $E_3$}
\end{center}
\end{figure}

For the next two proofs, We again use contradiction. We begin by parsing the area observed outside the convex hull into 4 distinct cases based on their location outside the convex hull (as seen in figure 2), and from each one of these cases, we prove that there exists some point inside or along the convex hull that is entrapping either all weighted regions+points or obstacles+points that is closer to the set of points $p_1$, $p_2$,..., $p_N$ than any point outside the convex hull.

\begin{figure}[h]
\begin{center}
\includegraphics[scale=.4]{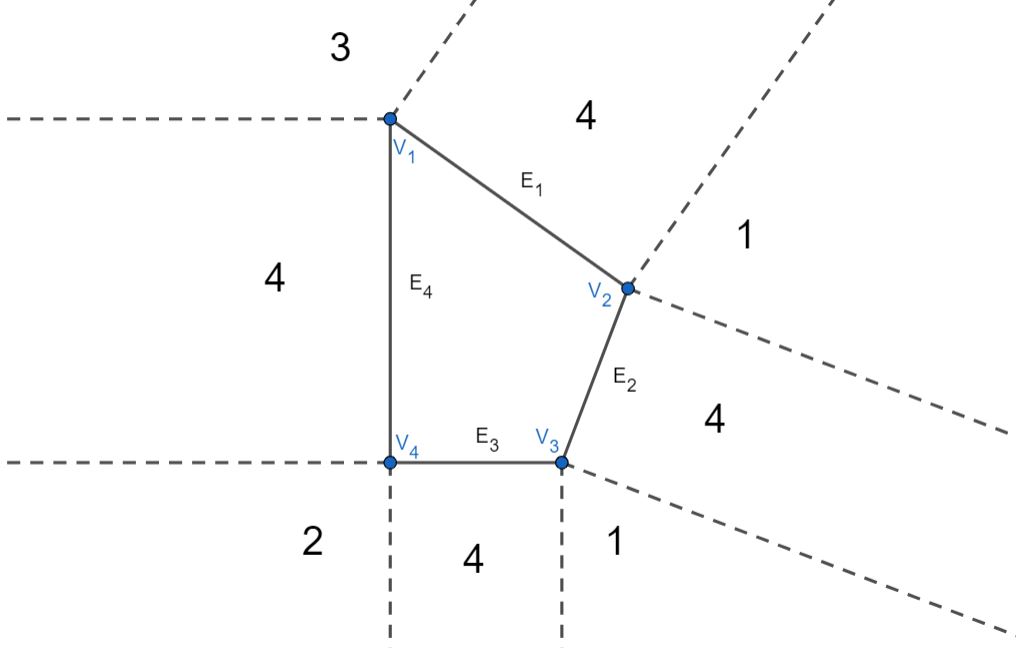}
\caption{Parsing of area outside convex hull into 4 cases}
\end{center}
\end{figure}

We propose an algorithm that deals with weighted regions. By finding a minimum cost algorithm along this surface, it becomes possible to approximate the lower bound on the pricing for a star topological configuration of power lines from a central node to a set of nodes $p_1$, $p_2$,$...$, $p_N$ by bounding the convex hull inside a rectangular object and placing a grid pattern over it. Continue onward by splitting this graph into squares with side length $n_i$ (where $i=1$ initially and increases each iteration increases accuracy), and place a node at the center of each square that takes a weighted average of the regions within the square (represented as $h_i^{(0)}$ where this $i$ counts up to the number of squares in the grid). If any of $p_1$, $p_2$,$...$, $p_N$ falls inside the square, evaluate it at the node in the center of the square. This approach is demonstrated in figure 3. 

\begin{figure}[h]
\begin{center}
\includegraphics[scale=.42]{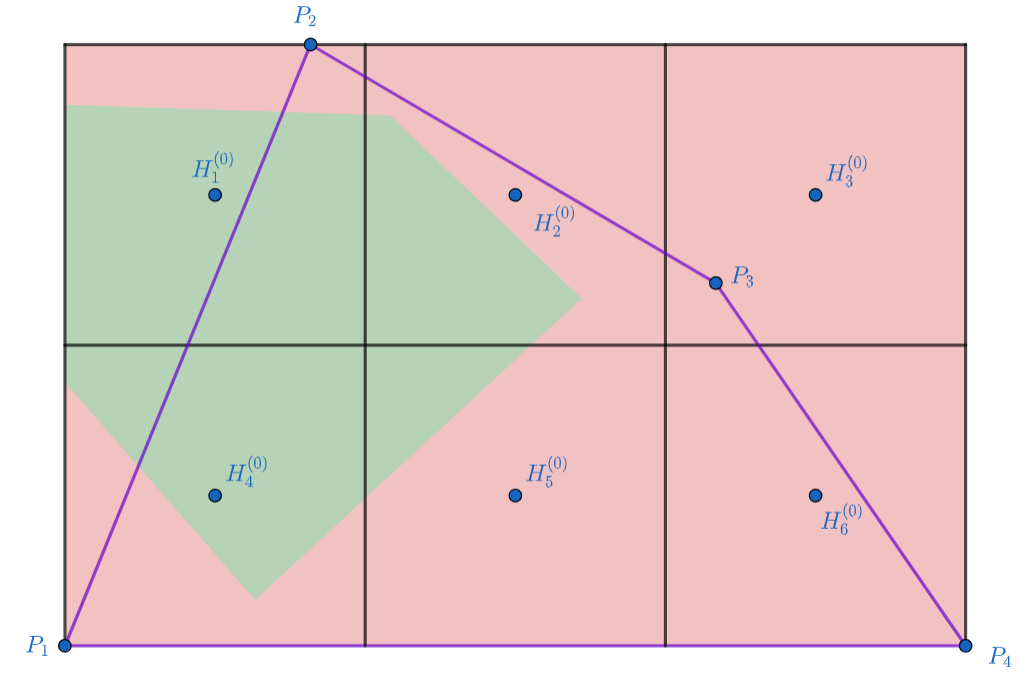}
\caption{Arbitrary region with nodes $p_1$, $p_2$, $p_3$, $p_4$ surrounded by a convex hull (in purple). Graph has two distinct weighted regions (red highlight and green highlight) and is split into 6 squares, each with its center labeled $h_i^{(0)}$. Within each square the average of the weighted regions must be calculated and any of the initial nodes $p_1$, $p_2$, $p_3$, $p_4$ are evaluated at the nearest $h_i^{(0)}$.}
\end{center}
\end{figure}

From the node that results in
$$\min \sum_{j=1}^{N} w_a(h_i^{(0)}, p_j)$$
where $w_a(x, y)$ implements a minimum monotone A* pathfinding algorithm between nodes $x$, $y$ (formalized in section 4), create a square that intersects the 8 successor nodes. From this new square, continuously reiterate through the pattern outlined earlier (splitting into new squares with side length $2n_{i+1}$), leading towards an approximation of $q$. Additionally, it is possible to incorporate Weiszfeld's algorithm once some square that intersects the 8 successor nodes contains a consistent weighting. 

With the second algorithm (which deals with obstacles), we start in a similar fashion, placing a grid over the convex hull and splitting it into squares. Certain squares were kept or removed as necessary based on whether they intersected the convex hull, and from the remaining squares, a node was placed at the center of each and labeled similarly to the first algorithm (shown in figure 4). However, from each node, the continuous Dijkstra paradigm could be applied, which dealt with Euclidean plane cases \cite{continuousMitchell}. From here, create a new square around the 8 successor nodes. Place a grid over this square and place new nodes $h_i^{(1)}$ at the center of each new square, repeating the process using the continuous Dijkstra paradigm from every new node. This algorithm runs in $O(n\log{n}\log{\frac{1}{\epsilon_c}})$ time. 

\begin{figure}[h]
\begin{center}
\includegraphics[scale=.333]{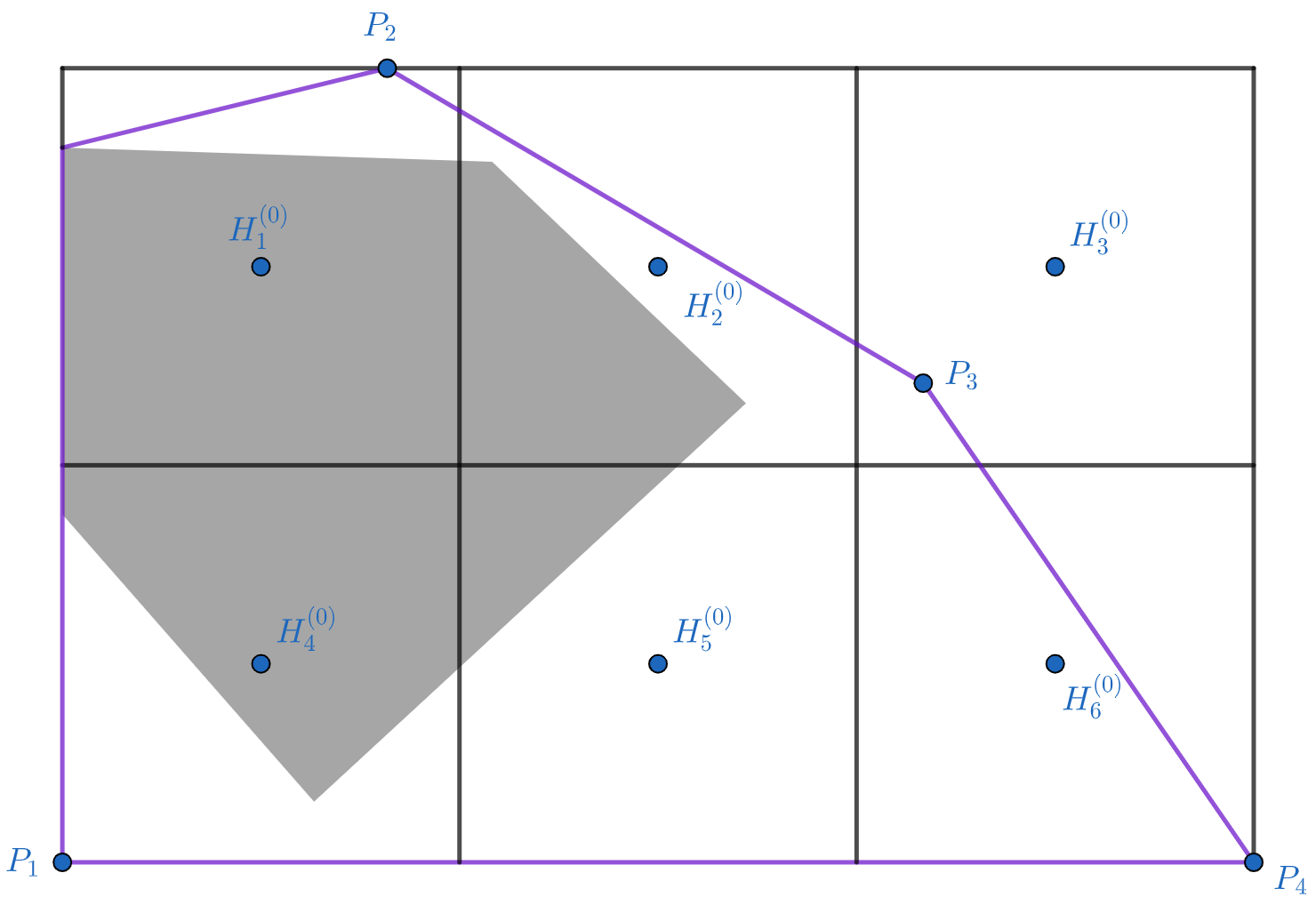}
\caption{Arbitrary region with nodes $p_1$, $p_2$, $p_3$, $p_4$ surrounded by a convex hull (in purple). Graph has one obstacle (highlighted in black) and is split into 6 squares, each with its center labeled $h_i^{(0)}$. Squares that fall within the obstacle are not evaluated.}
\end{center}
\end{figure}

For both examples, we compute an upper bound on the number of iterations needed to calculate $q$ to within some value of $\epsilon$ and produce basic psuedocode outlining both algorithms. \\
\noindent \\
\textbf{Structure of the paper.} The rest of the paper is organized as follows. Section 3 details opening and background information utilized throughout the paper. Section 4 introduces boundings for areas that must be checked to find the minimum sum of distances to several sample points. Sections 5 and 6 introduce the weighted region and obstacle problems respectively and both describe algorithms that can approximate a solution. Section 7 gives a brief overview of each algorithm. Section 8 contains conclusions. All proofs are located in the appendix, appearing in chronological order. 

\section{Literature Review}

Most content covered revolves around Wieszfeld's algorithm  \cite{einstein,Eckhardt_1980,Love2000,geometricmedian}, A* pathfinding \cite{monotone,Mousoulides2014}, continuous Dijkstra paradigm (figure 5) \cite{cdp2,latexcompanion}, and other select minimum distance pathfinding algorithms. However, there is a discontinuity in applying Weiszfeld's algorithm to weighted regions or obstacles. Although Weiszfeld's algorithm can operate in near-linear time \cite{geometricmedian} (and thus is more optimal than either of my proposed algorithms), it still cannot deal with cases that are not a consistent Euclidean plane. To begin addressing this challenge, we utilize the discretization of the Euclidean plane \cite{cdp2}, breaking it into multiple shortest path problems \cite{latexcompanion}, and then approximating the minimum by summing the independent paths. We refer to a generalized version of Weiszfeld's algorithm as star topology (or star network), where one central node connects to several other, distinct nodes. Although algorithms such as A* pathfinding and the continuous Dijkstra paradigm looked promising initially, they lacked the scalability of Weiszfeld's algorithm and thus we looked for ways to expand these algorithms to approximate a node that minimizes distance to all other nodes. An application of Weiszfeld's algorithm is demonstrated in figure 6.

\begin{figure}[h]
\begin{center}
\includegraphics[scale=.3]{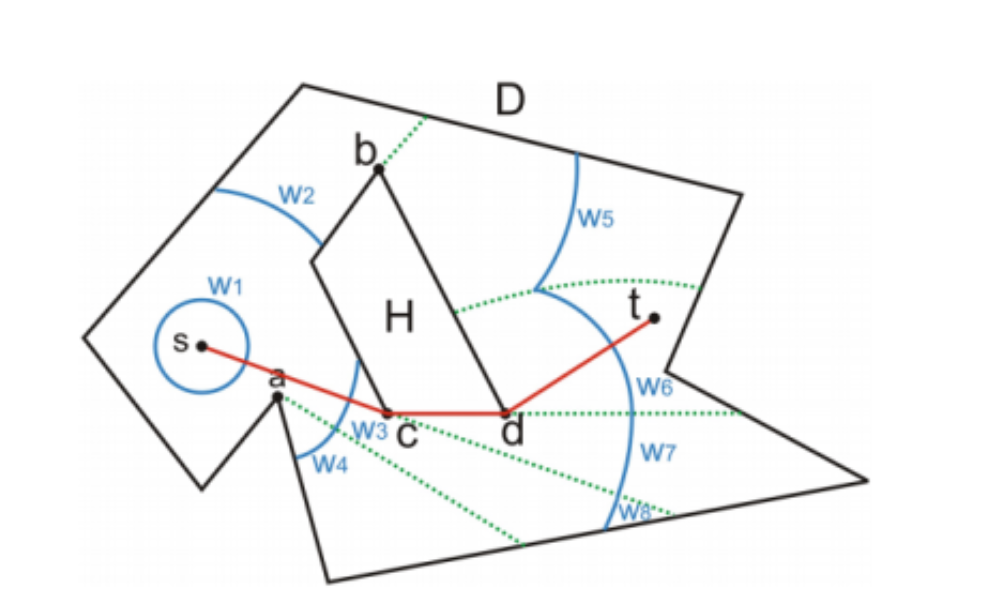}
\caption{Continuous Dijkstra paradigm finding a minimum path from node $s$ to node $t$ (each W represents another stage of the propagation of the wavelet) around obstacle $H$ \cite{latexcompanion}}
\end{center}
\end{figure}
$\\$

\begin{figure}[h]
\begin{center}
\includegraphics[scale=.24]{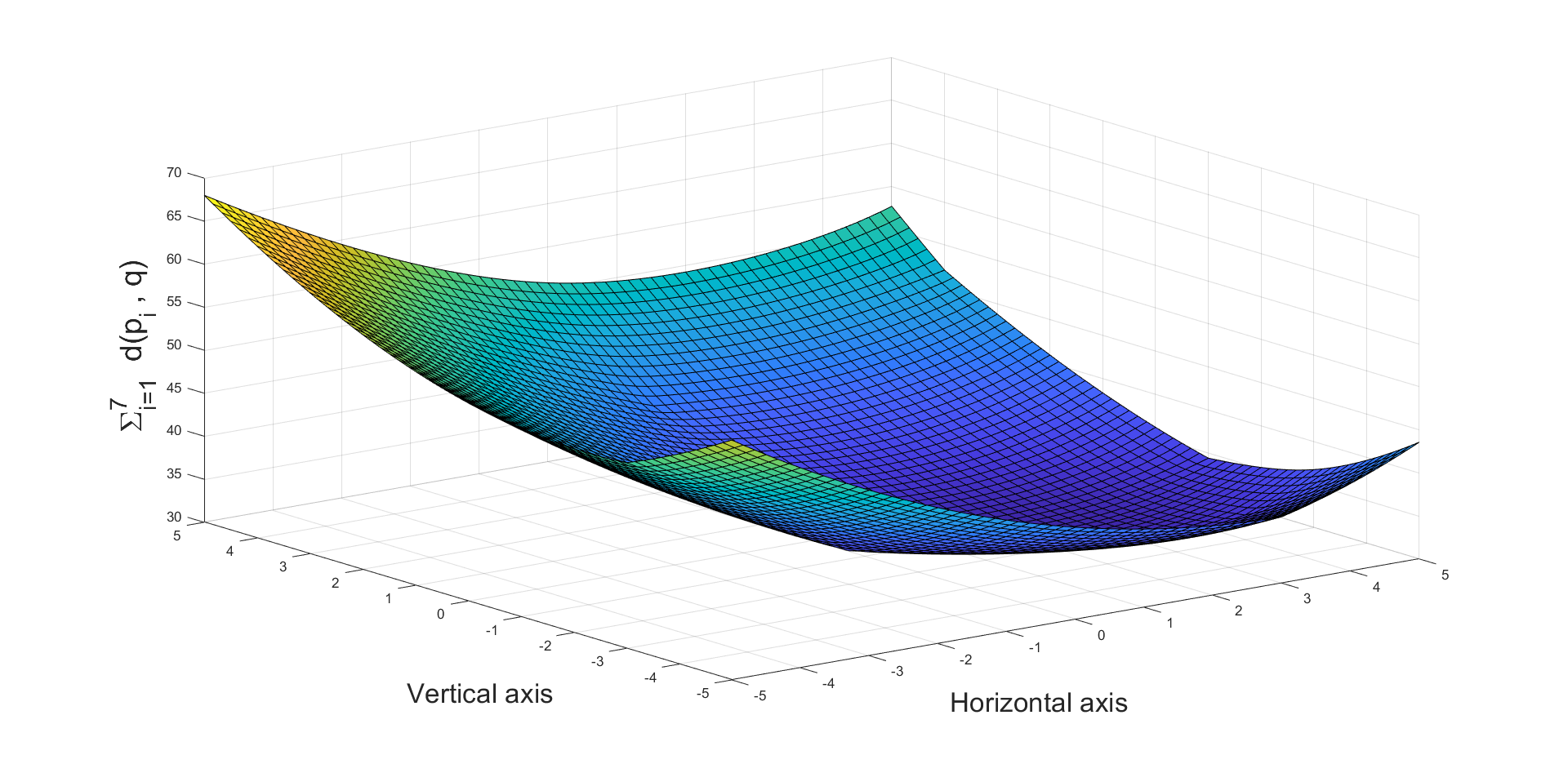}
\caption{Manifold produced by summing the distance to 7 randomly selected points. Weiszfeld's algorithm can then be applied to approximate the minimum sum along this manifold}
\end{center}
\end{figure}

\section{Bounding via Convex Hull}

\begin{definition}
$p(x, y)$ represents some arbitrary path between points $x$, $y$.
\end{definition}

\begin{definition}
$d(x, y)$ represents the numerical minimum distance between two points $x$, $y$ where $x, y \in \mathbb{R}^2$. For example, 
$$d(x, y) = \min_{\forall p(x, y)} p(x, y).$$
\end{definition}

\begin{definition}
define $\ell(A(x, y))$ as the path of the distance outlined by some function $A$ between points $x$, $y \in \mathbb{R}^2$. For example, $\ell(d(x, y))$ represents the path that is traveled along $d(x, y)$.
\end{definition}

\begin{definition}
A convex hull $C$ with finite volume and $N$ points $p_1$, $p_2$,$...$, $p_N$ can be defined as \cite{convexhull}
$$C \equiv \left\{  \sum_{j=1}^{N}\lambda_{j}\ p_j: \lambda_j \geq 0 \  \textnormal{for all}\ j\ \textnormal{and} \sum_{j=1}^{N} \lambda_j = 1\right\}.$$
\end{definition}

\begin{definition}
Assume $\sqcap$ represents the intersection of two distinct geometric constructs.
\end{definition}

\begin{lemma}
For point $q \in \mathbb{R}^2$ that falls on some side of line $o$ and points $p_1, p_2,..., p_N \in \mathbb{R}^2$ that fall on the same side of line $o$ as $q$ or on line $o$, 
$$d(p_{i}, q') \geq d(p_{i}, q)$$
where $q^\prime$ is the reflection of $q$ across $o$.
\end{lemma}

\begin{theorem}
For some set of $N+1$ points $q$, $p_1$, $p_2$,$...$, $p_N \in \mathbb{R}^2$, 
$$\min \sum_{i=1}^{N}d(q, p_i)$$ 
occurs when $q$ falls in or along $C$, where $C$ is the convex hull formed by points $p_1$, $p_2$,$...$, $p_N$.
\end{theorem}

\begin{remark}
Although this technique was effective for the scenario without obstacles or weighted regions, it did not generalize well. As a result, a new, more intensive technique was necessary to approach these challenges. However, we decided to keep these proofs both for completeness and to give insight to a technique that may have future possibilities.
\end{remark}

\begin{definition}
$$d(x, y, z) = \min_{\forall \ell(d(x, z))} d(x, z)\ \textnormal{such that}\ \ell(d(x, z)) \sqcap y \neq \emptyset\ \textnormal{for}\ x, y, z \in \mathbb{R}^2.$$ 
\end{definition}

\begin{lemma}
For some set of points $p_1$, $p_2$,$...$, $p_N$ and set of polygons whose interior can be modeled by $\mathcal{P} = \{P_1, P_2, P_3,..., P_k\}$ that all fall on or on one side of some line $o$ in the Euclidean plane, the minimum sum of distances from a new point $q$ to the set of points $p_1$, $p_2$,$...$, $p_N$ such that 
$\forall P_j \in \mathcal{P}, P_j \sqcap p_i = \emptyset and P_j \sqcap \ell(d(q, p_i)) = \emptyset$ for $i \in \llbracket 1, N \rrbracket$ must occur where $q$ is either on line $o$ or on the same side as the aforementioned set of points.
\end{lemma}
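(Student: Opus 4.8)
The plan is to mimic the reflection argument that proves the earlier Lemma and Theorem, but now carrying the obstacle set $\mathcal{P}$ along with the points under the reflection across $o$. Suppose, for contradiction, that the minimizing point $q$ lies strictly on the far side of $o$ from the points $p_1,\dots,p_N$ (and from the polygons, which by hypothesis all lie on or on the near side of $o$). Let $q'$ be the reflection of $q$ across $o$. The strategy is to show two things: first, that $q'$ is still \emph{feasible}, i.e. there exist admissible paths from $q'$ to each $p_i$ that avoid every $P_j\in\mathcal{P}$; and second, that the feasible distance strictly decreases, $\sum_i d(q',p_i,\text{avoiding }\mathcal{P}) < \sum_i d(q,p_i,\text{avoiding }\mathcal{P})$, contradicting minimality of $q$.

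First I would set up feasibility. Fix an optimal admissible path $\ell(d(q,p_i))$ from $q$ to $p_i$ that avoids all polygons. This path starts at $q$ on the far side of $o$ and ends at $p_i$ on the near side (or on $o$), so it crosses $o$; let $r_i$ be its first crossing point with $o$. Split the path as $q\leadsto r_i$ (call it $\alpha_i$, lying on the far side) followed by $r_i\leadsto p_i$ (call it $\beta_i$). Now reflect $\alpha_i$ across $o$ to get $\alpha_i'$, a path from $q'$ to $r_i$ lying entirely on the near side of $o$; concatenate $\alpha_i'$ with $\beta_i$ to get a path $\gamma_i$ from $q'$ to $p_i$. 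The key point is that $\gamma_i$ is admissible: $\beta_i$ already avoids every $P_j$ by assumption, and $\alpha_i'$ lies on the near side of $o$ — but so do the polygons, so this requires an argument. Here I would invoke (or prove a small strengthening of) the convex‑hull reflection machinery: either the polygons lie strictly on the near side so that any curve sufficiently close to $o$ on the near side still misses them, or — the cleaner route — I would instead reflect the \emph{polygons} as well to the far side, run the earlier Lemma's reflection argument in the reflected configuration, and then reflect everything back. Concretely, since $\alpha_i$ avoids $\mathcal{P}$ and lies on the far side, its reflection $\alpha_i'$ avoids the reflected polygons $\mathcal{P}'$, which lie on the near side; one then checks that $\alpha_i'$ also avoids the original $\mathcal{P}$ because $\mathcal{P}$ and $\mathcal{P}'$ occupy "opposite" closed half‑planes and $\alpha_i'$ only touches $o$ at its endpoint $r_i$ (where $r_i\notin P_j$ since $P_j\sqcap \ell(d(q,p_i))=\emptyset$ and $r_i$ is on the path). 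That last boundary bookkeeping — what happens exactly on the line $o$ — is where I would be most careful.

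Next, the length comparison. Since reflection across a line is an isometry, $\mathrm{length}(\alpha_i')=\mathrm{length}(\alpha_i)$, so $\mathrm{length}(\gamma_i)=\mathrm{length}(\alpha_i)+\mathrm{length}(\beta_i)=\mathrm{length}(\ell(d(q,p_i)))=d(q,p_i,\text{avoiding }\mathcal{P})$. Hence the admissible distance from $q'$ to $p_i$ is \emph{at most} $d(q,p_i,\text{avoiding }\mathcal{P})$ for every $i$; summing gives $\sum_i d(q',p_i,\text{avoiding }\mathcal{P})\le \sum_i d(q,p_i,\text{avoiding }\mathcal{P})$. To upgrade $\le$ to the contradiction we need it for the \emph{strict} case: if $q$ is strictly on the far side, then (exactly as in the proof of the obstacle‑free Lemma) the straight segment $q'p_i$ through $o$ is strictly shorter than $qp_i$, and more generally the reflected path can be shortcut near $o$, forcing a strict inequality for at least one $i$ — this is where the earlier Lemma is applied essentially verbatim. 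That yields $\sum_i d(q',p_i,\cdot) < \sum_i d(q,p_i,\cdot)$, contradicting that $q$ achieves the minimum.

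Finally I would assemble the conclusion: the argument shows any candidate minimizer strictly on the far side of $o$ is beaten by its reflection, so the minimum must be attained with $q$ on $o$ or on the same side as $p_1,\dots,p_N$. I expect the main obstacle to be the feasibility step — rigorously arguing that reflecting a polygon‑avoiding subpath across $o$ still avoids the (un‑reflected) polygons, including the degenerate behavior of paths that run along $o$ or touch $\partial P_j$; handling this may require assuming the polygons lie in the \emph{open} half‑plane, or a limiting/perturbation argument pushing paths infinitesimally off $o$, which should be stated explicitly as part of the hypotheses or the proof.
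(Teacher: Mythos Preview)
Your reflection approach diverges from the paper's and runs into a real obstruction at exactly the point you flag. After reflecting the far-side subpath $\alpha_i$ across $o$, the image $\alpha_i'$ lies on the \emph{same} side of $o$ as the polygons $\mathcal{P}$, so there is no reason it avoids them. Your proposed fix --- reflecting the polygons to $\mathcal{P}'$ and noting that $\alpha_i'$ avoids $\mathcal{P}'$ --- establishes only that $\alpha_i'$ misses the reflected obstacles, not the originals; the sentence ``$\alpha_i'$ also avoids the original $\mathcal{P}$ because $\mathcal{P}$ and $\mathcal{P}'$ occupy opposite closed half-planes'' is a non sequitur, since $\alpha_i'$ and $\mathcal{P}$ sit on the same (near) side. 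Without this, feasibility of the candidate path $\gamma_i$ fails, and even if you patched it by a perturbation off $o$ you would still only obtain $\le$, leaving the strict inequality to a further shortcut argument you gesture at but do not carry out.

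The paper sidesteps the whole difficulty by \emph{projecting} $q_1$ perpendicularly onto $o$ rather than reflecting past it. With $q_2$ the foot of the perpendicular and $C_i$ the point where $\ell(d(q_1,p_i))$ meets $o$, the right triangle $q_1 q_2 C_i$ lies entirely in the closed far half-plane, which is obstacle-free by hypothesis; hence the geodesic distances $d(C_i,q_1)$ and $d(C_i,q_2)$ coincide with the Euclidean hypotenuse and leg, giving $d(C_i,q_2) < d(C_i,q_1)$ directly. Chaining $d(p_i,q_2) \le d(p_i,C_i)+d(C_i,q_2) < d(p_i,C_i)+d(C_i,q_1) = d(p_i,q_1)$ then finishes. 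The projection keeps all new geometry on the obstacle-free side of $o$, so feasibility is automatic and strictness comes for free from Pythagoras.
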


\begin{theorem}
For a set of polygons whose interiors can be modeled by $\mathcal{P} = \{P_1, P_2, P_3,..., P_k\}$ and $N+1$ points $q$, $p_1$, $p_2$,$...$, $p_N$ where 
$\forall P_j \in \mathcal{P}, P_j \sqcap p_i = \emptyset$ and $P_j \sqcap \ell(d(C_i, p_i)) \neq \emptyset$  for $i \in \llbracket 1, N \rrbracket$, 
$$\min \sum_{i=1}^{N}d(p_{i}, q)$$ 
must occur when $q$ falls in or along $C$, where $C$ is the convex hull entrapping all points $p_1$, $p_2$,$...$, $p_N$, and the set of polygons $\mathcal{P}$.
\end{theorem}

\begin{definition}
Assume $p(x, y)$ traverses some distance $D$ across $j$ polygons, where $D$ can be represented by the set of $k$ paths $\{d_1, d_2,..., d_k\}$ and $k+1$ points $\{x, y, q_1, q_2,..., q_{k-1}\}$ where $\forall d_i, d_i$ traverses across only one polygon $P_j$ and $d_i$ falls between points $q_{i-1}$, $q_i$ (which fall on the boundaries of polygons) for $i \in \llbracket 2, k-1 \rrbracket$. For $i=1$, $d_i$ falls between points $x$, $q_1$, and for $i=k$, $d_k$ falls between $q_{k-1}$, $y$. Furthermore, assume $w(i)$ represents the weighting for $d_i$ traversing across $P_j$. Then,
$$D = \sum_{i=1}^{k} d_i .$$
The minimum distance along a weighted regions surface can then be expressed by $w(x, y)$, where
$$w(x, y) = \min \sum_{i = 1}^k w(i) \cdot d_i .$$
\end{definition}

\begin{lemma}
For some finite region $\in \mathbb{R}^2$ parsed into $k$ distinct polygons $\mathcal{P} = \{P_1, P_2, P_3,..., P_k\}$ each with an independent weighting $w(k)$ and $N+1$ points $q$, $p_1$, $p_2$,$...$, $p_N$, such that $p_1$, $p_2$,$...$, $p_N$ and $\mathcal{P}$ fall on the same side or on some line $o$ in the Euclidean plane,
$$\min \sum_{i=1}^{N}w(p_i, q)$$
occurs when $q$ falls in or on $C$ where C is the space of the convex hull entrapping all points $p_1$, $p_2$,$...$, $p_N$, and the polygons $\mathcal{P}$.
\end{lemma}

\begin{theorem}
For some finite region $\in \mathbb{R}^2$ parsed into $k$ polygons $\mathcal{P} = \{P_1, P_2, P_3,..., P_k\}$ each with an independent weighting $w(k)$ and $N+1$ points $q$, $p_1$, $p_2$,$...$, $p_N$, 
$$\min \sum_{i=1}^{N}w(p_i, q)$$
occurs when $q$ falls in or along $C$, where C is the space of the convex hull entrapping all points $p_1$, $p_2$,$...$, $p_N$, and the set of polygons $\mathcal{P}.$
\end{theorem}

\section{Star Topology Weighted Regions Algorithm}
In this section, we detail our first algorithm meant to deal with weighted regions for star topology algorithms.

\begin{definition}
For some arbitrary path between nodes $n_1$, $n_N$ that traverses $N$ nodes where from each node $n_i$ it must travel to one of the 8 surrounding successor nodes $n_{i+1}$, 
$$w_a(n_1, n_N) = \min \sum_{i=1}^{N-1} d(n_i, n_{i+1}) \cdot w(i+1)\ \textnormal{and for each new i,}\ n_i := n_{i+1}$$
where each new $n_i$ has 8 new $n_{i+1}$ that can be selected, $w(i+1)$ is the weighting of the successor node $n_{i+1}$ that was chosen, and for $i \in \llbracket 1, N-1 \rrbracket, n_i \in \mathbb{R}^2$.
\end{definition}

With the knowledge that the exact solution to the aforementioned weighted regions algorithm cannot be calculated in polynomial time (except for select special cases) \cite{einstein}, an approximation algorithm is created by procuring a grid pattern on the Euclidean plane that is equally spaced by some arbitrary distance $n_1$ across the convex hull. the convex hull can be inscribed in a rectangle with side lengths $l$, $m$. The number of nodes created by this technique can be calculated as $\ceil{\frac{l}{n_1}} \cdot \ceil{\frac{m}{n_1}}$, where on the sides of the rectangle the grid extends over by 
$$\frac{\ceil{\frac{l}{n_1}} - \frac{l}{n_1}}{2}\ \textnormal{and}\ \frac{\ceil{\frac{m}{n_1}} - \frac{m}{n_1}}{2}$$
for sides $l$, $m$ respectively. For each square formed by the intersections of the grid, position a node $h_i^{(0)}$ at the center that takes the weighted average of the regions inside the square. If any $p_1$, $p_2$,$...$, $p_N$ fall within a weighted square, evaluate its position as the center node of the square. In the off chance that a point falls along an edge (or corner), randomly select the center node for one of the nearest squares. Using a monotone heuristic A* pathfinding algorithm \cite{monotone} from a central node $q$ (at each $h_i^{(0)}$) to each independent node $p_1$, $p_2$,$...$, $p_N$, a two-dimensional manifold can be constructed. Assume that the successor nodes to some node are the 8 surrounding nodes. Create a square called $s_1$ with side length $2n_1$ with the 8 surrounding nodes such that the lowest node is entrapped by this square. If an edge or corner square is selected, add squares around the square chosen such that there exists 8 surrounding squares/nodes. Each of the 8 surrounding nodes are then given a weighting based on the number of paths that run through each node. From here, parse $s_i$ (in this case $s_1$) into 
$${\left[\rule{0cm}{.75cm}
\sqrt{\ceil*{\frac{l}{n_i}} \cdot \ceil*{\frac{m}{n_i}}}
\right]\rule{0cm}{.75cm}}^2$$
(where $[\ ]$ represents the nearest integer value) new squares, each with side length
$$ n_{i+1} = \frac{2n_i}{\left[\rule{0cm}{.55cm}\sqrt{\ceil*{\frac{l}{n_i}} \cdot \ceil*{\frac{m}{n_i}}}\right]\rule{0cm}{.55cm}}\ \textnormal{for}\ i \in \mathbb{Z}^+$$
Similar to earlier, create a node at the center of each of these squares that takes a weighted average of the regions inside the square. By again using a monotone heuristic A* pathfinding algorithm from each node, the minimum cost of traversing some paths from each node can be found. Proceed to create a new square $s_{i+1}$ with side length $2n_{i+1}$ that entraps the lowest node. From here assign $l, m := 2n_{i+1}$ and increment $h_i^{(j)}$ to $h_i^{(j+1)}$, and continuously reiterate through this pattern. This results in
$$\lim_{y\to\infty}{s_y} = q$$
where $q$ is an approximation of 
$$\min \sum_{i=1}^{N}w_a(p_i, q)$$
Although higher values of $n_1$ lead to faster convergence and greater accuracy, it likewise rapidly increases complexity. Low values of $n_1$ lead to extreme granularity, resulting in a likelihood of completely missing a reasonable local minimum. \\

\subsection{Star Topology Weighted Regions Algorithm Logic}
The use of a monotone heuristic algorithm results in significantly more nodes being covered while looking for a minimum distance path. This allows for the reuse of covered nodes when attempting to expand to other nodes. A* pathfinding operates under the equation $$f(n) = g(n) + h(n)$$ where $n$ is the next node on some path,  $g(n)$ represents the cost from the start node to node $n$, and $h(n)$ is a heuristic that estimates the cost of the minimum path from $n$ to the goal. The bounding for the evaluation of this heuristic can be modeled by 
$$h(n) \leq c(n, n_p) + h(n_p)\ \textnormal{and}\ h(G) = 0$$ where $n_p$ is a successor to $n$, $c(x, y)$ is the cost of traversing from node $x$ to $y$, and $G$ is a goal node. This heuristic is defined as monotone (or consistent), and by induction, will always be admissible and therefore will never overestimate the cost of reaching the goal \cite{Mousoulides2014}.

The creation of the two-dimensional manifold plays an integral role in finding the absolute minimum of the $\ceil{\frac{l}{n_i}} \cdot \ceil{\frac{m}{n_i}}$ test points. We assume that the relatively consistent topological tendencies for the areas of interest (i.e. areas with low access to electricity) \cite{topology} necessitate that the absolute minimum on the Euclidean plane of 
$$\frac{
(2l + \ceil{\frac{l}{n_i}} - \frac{l}{n_i})(2m + \ceil{\frac{m}{n_i}} - \frac{m}{n_i})
}{4}$$
has a likelihood of occurring between the minimum node and its 8 surrounding nodes instead of between two other, non-minimal nodes. This topological consistency is the same logic as to why we use the convex hull of the points $p_1$, $p_2$,$...$, $p_N$ as a bounding for the weighted regions graph even though there is a possibility that the node $q$ that leads to
$$\min \sum_{i=1}^{N}w_a(p_i, q)$$
existing outside of the convex hull. 

The weightings for each of the 8 successor nodes that fall on $s_i$ surrounding the lowest node can be determined by parsing 
$$\sum_{i=1}^{N}w_a(p_i, q)$$
into 8 distinct values $w(1), w(2),.., w(8)$ where each $w(y)$ for $y \in \llbracket 1, 8 \rrbracket$ is correlated to one of the successor nodes and the first one of these nodes that $\ell(w_a(q, p_i))$ (which is equivalent to $\ell(w_a(p_i, q))$) hits after leaving $q$ increments the value of $w(y)$ by 1. Each  weighting $w(y)$ has a related successor node whose coordinated are stored in $c(\theta_y)$. Once the new grid with squares of side length
$$n_{i+1} = \frac{2n_i}{\left[\sqrt{\ceil*{\frac{l}{n_i}} \cdot \ceil*{\frac{m}{n_i}}}\right]}$$
is placed down, the new lowest node can be calculated by 
$$\min \sum_{y=1}^{8}w_a(\theta_y, q) \cdot w(y)$$
This pattern can be repeated over and over, converging towards a local minimum.

Eventually, the convergent pattern will form some square $s_i$ where the entire region inside the square has a constant weighting. At this point, the weighting of the region can be disregarded and the discretization of the graph can be stopped. We propose a simple modification to Weiszfeld's algorithm that takes into account the different weightings of each individual node. In this regard, the goal point $q$ from here is known as the geometric median \cite{geometricmedian}, which is defined as
$$q \in\argmin_{\forall points \exists\in s_i}f(q)\ \textnormal{where}\ f(q) = \sum_{n \in \llbracket 1, 8\rrbracket} w(n) \lVert q - c(\theta_n) \rVert_2$$
where $w(n)$ represents the weighting of some node $\theta_n$, $c(\theta_n)$ represents the coordinate position of the related successor node to $w(n)$ ($\in \mathbb{R}^2$)  and $s_i$ is the convex hull of points $c(\theta_n)$ for $n \in \llbracket 1, 8 \rrbracket$.

\begin{lemma}
Call some initial point inside square $s_i$ $q_1$. Continuously iterating through
$$q_{i+1} = \left( \sum_{n=1}^{8} \frac{w(n) \cdot c(\theta_n)}{\lVert q_i - c(\theta_n) \rVert} \right) \Bigg/ \left( \sum_{n=1}^{8} \frac{w(n)}{\lVert q_i - c(\theta_n) \rVert} \right)$$
will approach the geometric median $q$: 
$$ \lim_{i \to \infty} q_i = q.$$
\end{lemma}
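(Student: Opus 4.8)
First I would record the structural facts about $f(x)=\sum_{n=1}^{8}w(n)\lVert x-c(\theta_n)\rVert$ and identify the displayed recursion as the classical Weiszfeld iteration for a weighted geometric median; convergence will then follow from a majorization--minimization (MM) argument together with convexity. The function $f$ is convex, being a nonnegative combination of norms of affine maps, and --- provided the anchors $c(\theta_1),\dots,c(\theta_8)$ are not all collinear, which is automatic here since they are centers of distinct cells of a two--dimensional grid --- strictly convex off the anchor set, so the minimizer $q$ is unique; $f$ is coercive, so $q$ exists. Moreover, for any $x$ outside the convex hull $s_i$ of the anchors, replacing $x$ by its metric projection onto $s_i$ strictly decreases $\lVert x-c(\theta_n)\rVert$ for every $n$ (the obtuse--angle characterization of the metric projection onto a convex set), hence strictly decreases $f$; therefore $q\in s_i$, and since each iterate $q_{i+1}$ is a convex combination of the $c(\theta_n)$ with positive coefficients it also lies in $s_i$, so the whole analysis is confined to the compact set $s_i$. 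I also assume the starting point $q_1$, and hence every iterate, avoids the anchor set --- the one genuine subtlety, discussed at the end.

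Next I would build the majorizer. Fixing $q_i$ not equal to any anchor, set
$$\psi_i(x)=\frac{1}{2}\sum_{n=1}^{8}\frac{w(n)}{\lVert q_i-c(\theta_n)\rVert}\Big(\lVert x-c(\theta_n)\rVert^2+\lVert q_i-c(\theta_n)\rVert^2\Big).$$
The elementary inequality $2ab\le a^2+b^2$, applied termwise with $a=\lVert x-c(\theta_n)\rVert$ and $b=\lVert q_i-c(\theta_n)\rVert$, gives $\psi_i(x)\ge f(x)$ for all $x$, with equality at $x=q_i$. Since $\psi_i$ is a strictly convex quadratic, its unique minimizer is found from $\nabla\psi_i=0$, and this is exactly
$$q_{i+1}=\left(\sum_{n=1}^{8}\frac{w(n)\,c(\theta_n)}{\lVert q_i-c(\theta_n)\rVert}\right)\Bigg/\left(\sum_{n=1}^{8}\frac{w(n)}{\lVert q_i-c(\theta_n)\rVert}\right),$$
the recursion in the statement. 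Hence $f(q_{i+1})\le\psi_i(q_{i+1})\le\psi_i(q_i)=f(q_i)$, so $\{f(q_i)\}$ is non--increasing and bounded below, thus convergent, and $\{q_i\}\subset s_i$ is bounded.

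Finally I would close the loop. Extract a subsequence $q_{i_k}\to q^{\ast}$. If $q^{\ast}$ is not an anchor, the map $T$ sending $q_i$ to $q_{i+1}$ is continuous near $q^{\ast}$, so $q_{i_k+1}\to T(q^{\ast})$, and by joint continuity of the two--argument majorizer the value $\psi_{i_k}(q_{i_k+1})$ tends to the majorizer built at base point $q^{\ast}$ evaluated at $T(q^{\ast})$. Passing to the limit in $f(q_{i_k+1})\le\psi_{i_k}(q_{i_k+1})\le f(q_{i_k})$, whose two ends both converge to $\lim_i f(q_i)=f(q^{\ast})$, forces this limiting majorizer to attain the value $f(q^{\ast})$ at $T(q^{\ast})$; since it also equals $f(q^{\ast})$ at $q^{\ast}$ by the touching property and is strictly convex, $T(q^{\ast})=q^{\ast}$. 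But $T(q^{\ast})=q^{\ast}$ is precisely the stationarity condition $\nabla f(q^{\ast})=0$, so by convexity $q^{\ast}=q$. As every subsequential limit equals the unique point $q$, the full sequence converges: $\lim_{i\to\infty}q_i=q$.

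The hard part is the anchor--point technicality: $f$ is non--differentiable at each $c(\theta_n)$ and $T$ is undefined there, so one must justify that no iterate lands on an anchor. The standard resolution --- Weiszfeld's, with the known gap repaired by Kuhn and by Ostresh --- is that the set of starting points some iterate of which hits an anchor is a countable union of lower--dimensional pieces, hence negligible, and in any case one may insert the usual modified step at an anchor without breaking the descent inequality; I would cite \cite{geometricmedian} for this and otherwise take $q_1$ outside that exceptional set. The only other point worth stating explicitly is the non--collinearity of the eight successor coordinates, which is needed for uniqueness of $q$ and is automatic for the centers of cells of a two--dimensional grid.
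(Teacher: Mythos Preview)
Your proof is correct and substantially more detailed than what the paper actually does. The paper's argument is a two--line reduction: it quotes the unweighted Weiszfeld iteration
\[
q_{i+1}=\Biggl(\sum_{n=1}^{N}\frac{c(\theta_n)}{\lVert q_i-c(\theta_n)\rVert}\Biggr)\Big/\Biggl(\sum_{n=1}^{N}\frac{1}{\lVert q_i-c(\theta_n)\rVert}\Biggr),
\]
cites \cite{Eckhardt_1980,Love2000} for its convergence (under the same anchor--avoidance hypothesis you isolate), and then observes that since each weight $w(n)$ here is an \emph{integer} count of paths through $c(\theta_n)$, one may simply duplicate each anchor $w(n)$ times and collapse the resulting double sum $\sum_{n}\sum_{j=1}^{w(n)}(\cdot)$ to obtain the weighted formula. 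No analysis is carried out; convergence is inherited from the cited unweighted result.

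Your route is genuinely different: you give a self--contained majorization--minimization proof (build the quadratic surrogate, establish descent, pass to a subsequential limit, identify the fixed point with the unique minimizer via convexity). This buys you generality --- your argument works for arbitrary nonnegative weights, not just integer multiplicities --- and it makes the mechanism of convergence explicit rather than deferring it to a citation. The paper's reduction, by contrast, is shorter and exploits the specific combinatorial origin of the $w(n)$ in this application. Both handle the anchor--point issue the same way, by assumption on $q_1$ plus a reference to the literature.
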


\begin{figure}[h]
\begin{center}
\includegraphics[scale=.28]{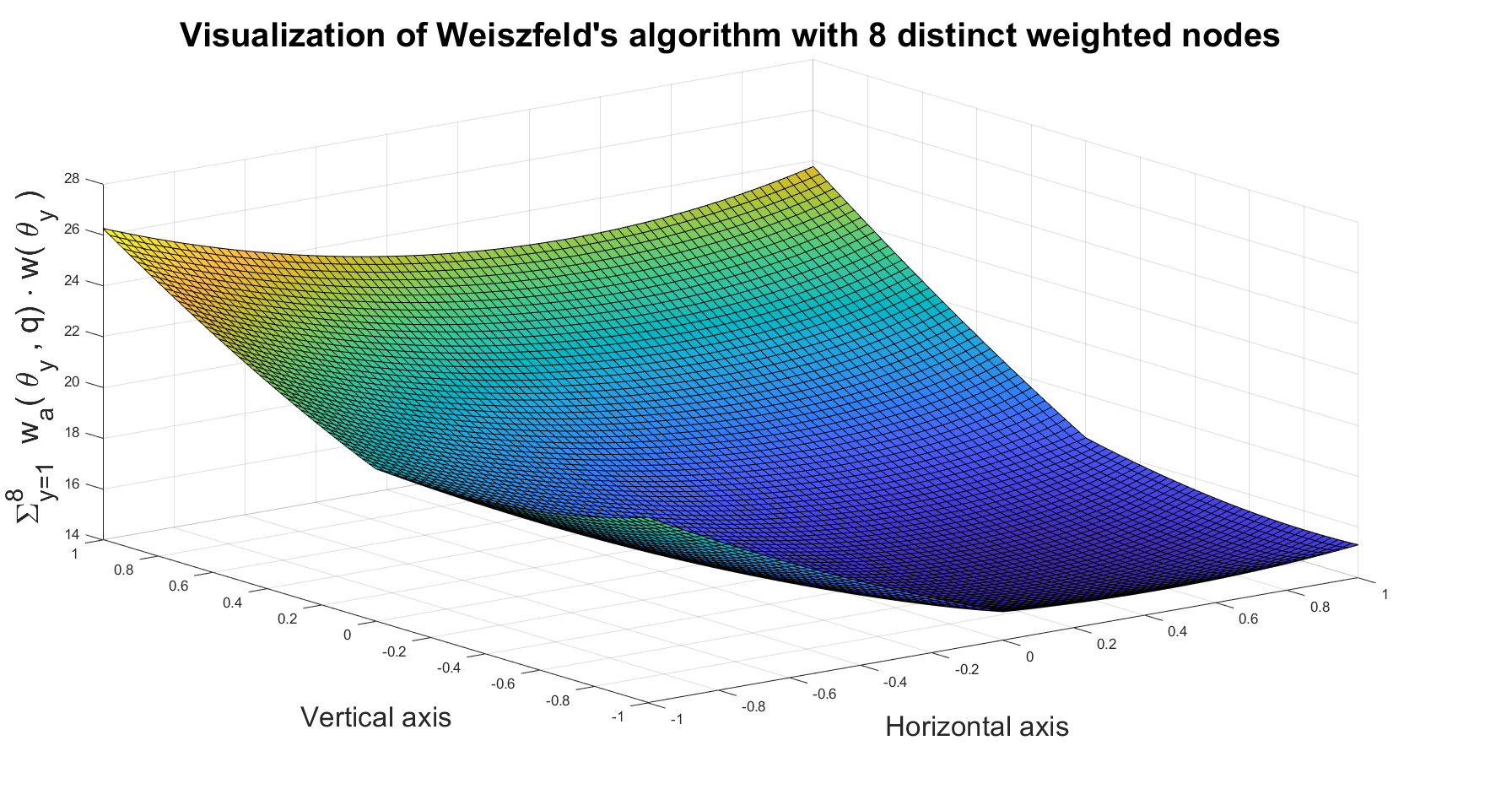}
\caption{Manifold produced by calculating the distance to each $c(\theta_y)$ for $y \in \llbracket 1, 8 \rrbracket$. From $(-1, 1)$ and going clockwise, the weightings are as follows: $1, 0, 1, 2, 4, 2, 1, 2$. Weiszfeld's algorithm would approximate the minimum along this manifold}
\end{center}
\end{figure}

\newpage
\begin{remark}
Note that since the Weighted Regions algorithm converges towards a single point over an infinite number of iterations and Euclidean space is defined as continuous, any randomly selected singular point that the algorithm converges towards in this space cannot fall on top of a line. Therefore, there will always be a case where Wesizfeld's algorithm can be applied.
\end{remark}

\begin{lemma}
The number of iterations needed to calculate the target accuracy (represented by some variable $\epsilon \propto n_1$) can be modeled by
$$\ceil*{\frac{3\log{2}-2\log{\epsilon_c}}{2\log{\left[\sqrt{\ceil*{\frac{l}{n_1}} \cdot \ceil*{\frac{m}{n_1}}}\right]}-2\log{2}}}+1$$
where $\epsilon_c = \frac{\epsilon}{n_1}$, $\epsilon$ is the target accuracy, and $\left[\sqrt{\ceil*{\frac{l}{n_1}} \cdot \ceil*{\frac{m}{n_1}}}\right]$ represents the number of squares after iteration $y+1$ in $s_y\ \forall y$ (this technique does not take into account the possible utilization of Weiszfeld's algorithm). 
\end{lemma}

\section{Star Topology Obstacles Algorithm}
This section details an algorithm that can compute a minimum for some set of nodes $p_1$, $p_2$,$...$, $p_N$. The technique used is necessary to allow for the calculation of a minimum while running in reasonable time (i.e. not just sampling every possible point across the convex hull). By continuously "zooming" in on a certain sub-section of the convex hull, it becomes possible to focus towards a minimum.

Although the continuous Dijkstra paradigm (CDP) is known to run in polynomial time $O(n \log{n})$ (where $n$ represents the number of distinct vertices along $\mathcal{P}$ and the convex hull) when finding the minimum distance between two points, little is known about the circumstances surrounding minimum distance from a central node to several other nodes with interiors of polygonal obstacles modeled by $\mathcal{P} = \{P_1, P_2, P_3,..., P_k\}$ where $\forall P_i \in \mathcal{P}, P_i \in \mathbb{R}^2$ \cite{cdp2,latexcompanion}. 

Begin by assuming that the paths from some central node $q$ to additional nodes $p_1$, $p_2$,$...$, $p_N$ cannot intersect with the interiors of these polygons (and each node is on the Euclidean plane). Begin by creating a convex hull $C \in \mathbb{R}^2$ entrapping all polygons $\mathcal{P}$ and nodes $p_1$, $p_2$,$...$, $p_N$ and placing a grid over the convex hull with each horizontal and vertical line separated by distance $n$. Extremely high values of $n$ result in inaccuracies that could lead towards a shallow local minimum, while extremely low values of $n$ are more likely to zone in on deeper local minimums but with the trade-off of high runtimes. Reform all squares that are only partially in the convex hull into complete squares with side length $n$, and any squares formed completely outside of the convex hull are discarded. From the remaining squares $r$, position an independent node at the center of each one. If any of these nodes fall inside $\mathcal{P}$, disregard the square (however still count it as a square $\in r$). From each one of these nodes $h_i^{(0)}$, simulate the paradigm by propagating an initially circular wavefront that changes on interactions with obstacles. The wavefront from $h_i^{(0)}$ at time $t$ is the set of points $W(d) := \{x \in C|d(h_i^{(0)}, x) = t\}$ where $x$ represents the points that are considered the wavefront at time $t$. Interactions with obstacles result in the wavefront being formed of multiple wavelets, each of which have a center from $h_i$ or from a vertex along $\mathcal{P}, C$. While the waves propagate, the distinct wavelets may be generated, disappeared, or broken into two wavelets \cite{latexcompanion}. When the two neighboring wavelets to some arbitrary wavelet merge together, the middle wavelet can be said to be eliminated. Any existing wavelet upon contact with $\mathcal{P}$ for the first time splits into two new wavelets. 

The first intersection between $W(d)$ and $p_j$ for each $j \in \llbracket 1, N \rrbracket$ has a related $d(h_i^{(0)}, p_j)$. For each $h_i^{(0)}$, 
$$\sum_{j=1}^{N}d(h_i^{(0)}, p_j)$$
For the minimum of all these values, create a square $s_1$ around it connecting the 8 surrounding squares. If an edge or corner square is selected for the center of $s_1$, create "empty" squares around the square chosen such that there exists 8 surrounding nodes. Although the grid is placed over these "empty" squares, any squares created after the grid is placed that are fully in the "empty" squares are disregarded (i.e. their points $h_i^{(0)}$ are not tested). Parse $s_1$ into $\left[\rule{0cm}{.3cm} \sqrt{r} \right]\rule{0cm}{.3cm}^2$ new squares. Create new nodes at the center of each new square called $h_i^{(1)}$. From each one of these nodes, repeat through the pattern outlined earlier (however, if $h_i^{(y)} \in \mathcal{P}$, don't test the point instead of removing it). With every new, smaller square ($s_y$) being formed, it has nodes $h_i^{(y)}$ for $i \in \left\llbracket 1, \left[\rule{0cm}{.3cm} \sqrt{r} \right]\rule{0cm}{.3cm}^2\right\rrbracket$. Continue until some level of accuracy $\epsilon$ where $\epsilon \propto n$ and $\epsilon \in (0, 2n\sqrt{2}]$. 

\begin{lemma}
The number of iterations needed to calculate the accuracy of $\epsilon$ can be determined by
$$\ceil*{\frac{3\log{2}-2\log{\epsilon_c}}{2\log{\left[\rule{0cm}{.3cm} \sqrt{r} \right]\rule{0cm}{.3cm}}-2\log{2}}}+1$$
where $\epsilon_c = \frac{\epsilon}{n}$, $\epsilon$ is the target accuracy and $\left[\rule{0cm}{.3cm} \sqrt{r} \right]\rule{0cm}{.3cm}^2$ represents the number of squares after iteration $y+1$ in $s_y\ \forall y$.
\end{lemma}

\newpage

\begin{figure}[h]
\begin{center}
\includegraphics[scale=.2781]{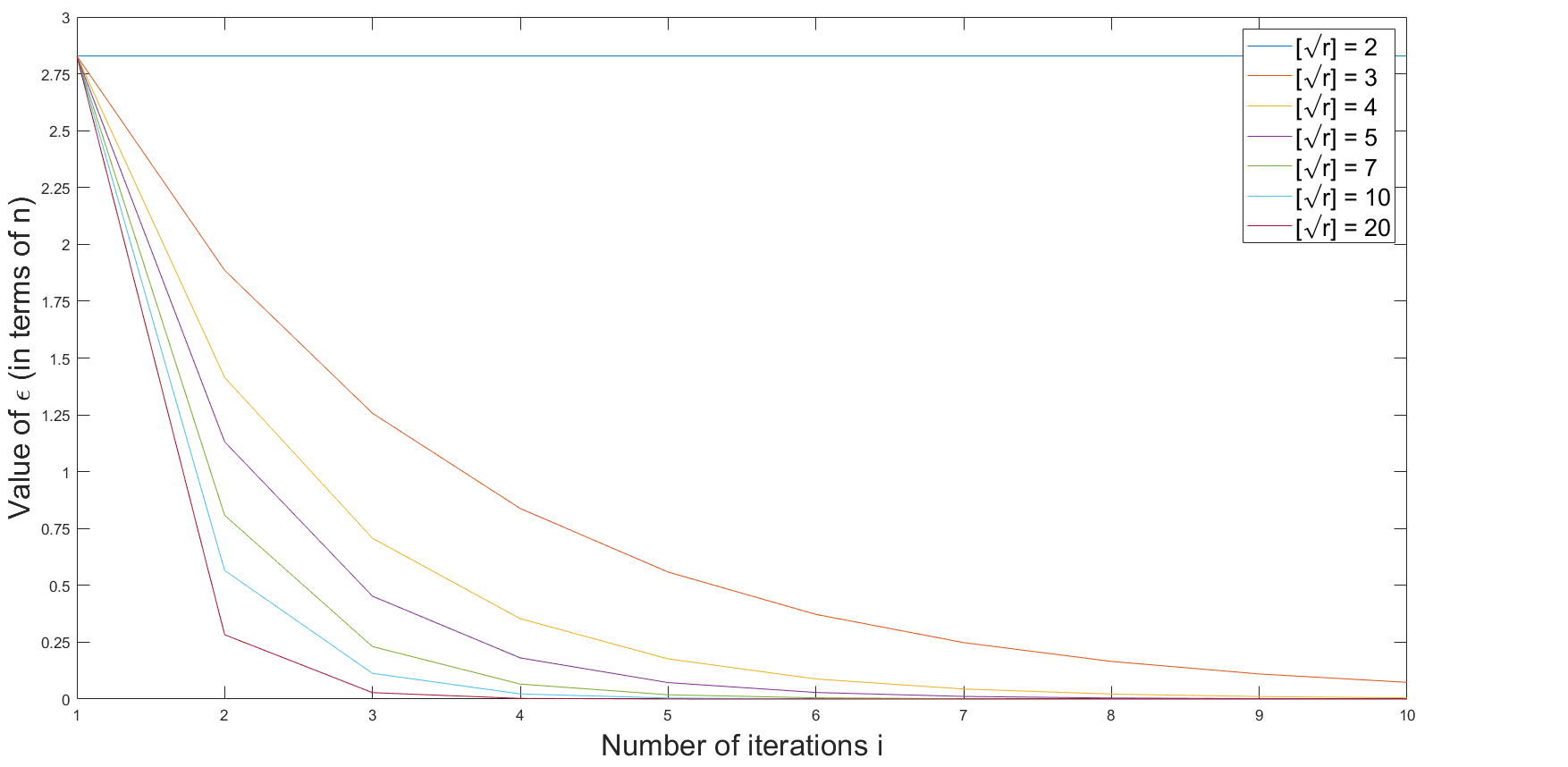}
\caption{Rate of convergence for distinct values of $\left[\rule{0cm}{.3cm} \sqrt{r} \right]\rule{0cm}{.3cm}$. If $\left[\rule{0cm}{.3cm} \sqrt{r} \right]\rule{0cm}{.3cm} \leq 2$, iterations of the algorithm will not converge}
\end{center}
\end{figure}

\begin{remark}
To shorten the notation for runtime complexity in this algorithm, we set
$\left[\rule{0cm}{.3cm} \sqrt{r} \right]\rule{0cm}{.3cm} = 4$. This was based on figure 8, where $\left[\rule{0cm}{.3cm} \sqrt{r} \right]\rule{0cm}{.3cm} = 4$ was determined to be a fair rate of convergence that would maintain a high level of accuracy. We believe it is fair to make these substitutions for the obstacles algorithm and not the weighted regions algorithm because there is greater variation with respects to accuracy in weighted regions, while this variation is more negligible with the obstacles case. 
\end{remark}

\begin{lemma}
The Star Topology Obstacles algorithm given some $\epsilon$ target accuracy in $\mathbb{R}^2$ runs in $O(n\log{n}\log{\frac{1}{\epsilon_c}})$ time
\end{lemma}

\newpage

\section{Overview of both Algorithms}

\begin{algorithm}[ht]
\caption{Star Topology Weigheted Regions Algorithm}\label{AAAAAAAAAAA}
\hspace*{\algorithmicindent} \textbf{Input}: points $q$, $p_1$, $p_2$,$...$, $p_N$ $\in \mathbb{R}^2$\\
\hspace*{\algorithmicindent} \textbf{Input}: set of polygonal regions $\mathcal{P}=\{P_1, P_2, P_3,..., P_k\} \in \mathbb{R}^2$\\
\hspace*{\algorithmicindent} \textbf{Input}: desired accuracy $\epsilon \in (0, 2n_1\sqrt{2}]$
\begin{algorithmic}[1]
\State Create convex hull $C$ such that $\forall p_i \in \mathbb{R}^2, p_i \in C$ 
\State Place a rectangle with dimensions $l$, $m$ such that $C$ is inscribed
\State Position a two-dimensional grid equally spaced some $n_1$ distance apart (and is able to extend over edges $l$, $m$) where all squares have a node $h_i^{(0)}$ at its center
\State $y, k = 0$ 
\While{$k\ != i = \ceil*{\frac{3\log{2}-2\log{\epsilon_c}}{2\log{\left[\sqrt{\ceil*{\frac{l}{n_1}} \cdot \ceil*{\frac{m}{n_1}}}\right]}-2\log{2}}}+1$}
    \For{$i \in \left\llbracket 1, {\left[\rule{0cm}{.225cm}
\sqrt{\ceil*{\frac{l}{n_i}} \cdot \ceil*{\frac{m}{n_i}}}
\right]\rule{0cm}{.225cm}}^2\right\rrbracket$} 
     \State calculate $\sum_{j=1}^{N}w_a(h_i^{(y)}, p_j)$
    \EndFor 
     \State $\forall h_i^{(y)}$, find $\min \sum_{j=1}^{N}w_a(h_i^{(y)}, p_j)$
     \State create square $s_{y+1}$ around $\min h_i^{(y)}$ with the 8 surrounding $h_i^{(y)}$ nodes
     \State cut $s_{y+1}$ into ${\left[\rule{0cm}{.225cm} \sqrt{\ceil*{\frac{l}{n_i}} \cdot \ceil*{\frac{m}{n_i}}}
\right]\rule{0cm}{.225cm}}^2$ new squares
     \State place $h_i^{(y+1)}$ at the center of each new square
\State $y\gets y + 1$
\State $k\gets k + 1$    
\EndWhile
\State \textbf{Output} $\min \sum_{j=1}^{N}w_a(h_i^{(y)}, p_j)$
\end{algorithmic}
\end{algorithm}

\begin{remark}
Note that on line 6 (in algorithm 1) for the first iteration, 
$$i \in \left\llbracket \rule{0cm}{.5cm} 1, 
\ceil*{\frac{l}{n_i}} \cdot \ceil*{\frac{m}{n_i}} \right\rrbracket \rule{0cm}{.5cm}.$$
\end{remark}

\begin{remark}
For the sake of brevity we choose not to include the possible Weiszfeld's algorithm addition, which can serve as an alternative to the while loop once the region inside some square $s_i$ contains the same weighting. 
\end{remark}

\newpage
\begin{algorithm}[ht]
\caption{Star Topology Obstacles Algorithm}\label{AAAAAAAAAAA}
\hspace*{\algorithmicindent} \textbf{Input}: points $q$, $p_1$, $p_2$,$...$, $p_N$ $\in \mathbb{R}^2$\\
\hspace*{\algorithmicindent} \textbf{Input}: set of polygonal obstacles $\mathcal{P} = \{P_1, P_2, P_3,..., P_k\} \in \mathbb{R}^2$\\
\hspace*{\algorithmicindent} \textbf{Input}: desired accuracy $\epsilon \in (0, 2n\sqrt{2}]$
\begin{algorithmic}[1]
\State Create $C$ such that $\forall p_i, P_j \in \mathbb{R}^2, p_i, P_j \in C$
\State Position a two-dimensional grid equally spaced some $n$ distance apart where all squares who have some part $\in C$ and some part $\not\in \mathcal{P}$ have a node $h_i^{(0)}$ at its center
\State $y, k = 0$ 
\While{$k\ != i = \ceil*{\frac{3\log{2}-2\log{\epsilon_c}}{2\log{\left[\rule{0cm}{.3cm} \sqrt{r} \right]\rule{0cm}{.3cm}} - 2\log{2}}}+1$}
    \For{$i \in \left\llbracket 1, \left[\rule{0cm}{.3cm} \sqrt{r} \right]\rule{0cm}{.3cm}^2\right\rrbracket$}
     \State calculate $\sum_{j=1}^{N}d(h_i^{(y)}, p_j)$
    \EndFor 
     \State $\forall h_i^{(y)}$, find $\min \sum_{j=1}^{N}d(h_i^{(y)}, p_j)$
     \State create square $s_{y+1}$ around $\min h_i^{(y)}$ with the 8 surrounding $h_i^{(y)}$ nodes
     \State cut $s_{y+1}$ into $\left[\rule{0cm}{.3cm} \sqrt{r} \right]\rule{0cm}{.3cm}^2$ new squares
     \State place $h_i^{(y+1)}$ at the center of each new square
\State $y\gets y + 1$
\State $k\gets k + 1$    
\EndWhile
\State \textbf{Output} $\min \sum_{j=1}^{N}d(h_i^{(y)}, p_j)$
\end{algorithmic}
\end{algorithm}

\section{Conclusion}

We initialize this paper by proposing boundings on the possible positions of the central node for minimum distance star topology graphs across three distinct cases. We additionally develop two novel algorithms, dealing with specific cases of star topology when there is the inclusion of obstacles and weighted regions, both of which are achieved by discretizing the Euclidean plane. Overall we believe these algorithms will provide new opportunities in civil engineering, finding cheaper routes to connect central power stations to multiple external cities. Although these algorithms are still in their early stages, we are interested in finding ways to  incorporate them into programming languages to develop manifolds similar to the ones obtained for Weiszfeld's algorithm, while also looking into possible runtimes for certain cases of the first algorithm (although this may not necessarily be calculable as A* pathfinding runs on a heuristic).

\newpage

\medskip

\bibliographystyle{plainurl}
\bibliography{name}

\appendix 
\section{Properties of Convex Hulls (Proofs)}
This section compiles all proofs for lemmas stated in section 4 on convex hulls surrounding Euclidean spaces, spaces with obstacles, and spaces parsed into weighted regions. 

\subsection{Bounding for Algorithms in Euclidean Space}

\begin{lemma}
For point $q \in \mathbb{R}^2$ that falls on some side of line $o$ and points $p_1, p_2,..., p_N \in \mathbb{R}^2$ that fall on the same side of line $o$ as $q$ or on line $o$, 
$$d(p_{i}, q') \geq d(p_{i}, q)$$
where $q^\prime$ is the reflection of $q$ across $o$.
\end{lemma}

\begin{proof}
Construct line segments $\overline{qp_i}$, $\overline{q^{\prime}p_i}$, $\overline{qq^\prime}$. Since $q^\prime$ is a reflection over $o$, it logically follows that $\overline{qq^\prime} \bot o$. Place a new point $p_i^\prime$ on $\overline{qq^\prime}$ such that $\overline{p_ip_i^\prime} \bot \overline{qq^\prime}$ and thus,  $\overline{p_ip_i^\prime} \parallel \overline{o}$. Since $p_i^\prime$ falls either on the same side as $q$ or on $o$, 

$$\overline{p_i^{\prime}q^{\prime}} \geq \overline{p_i^{\prime}q}$$
Since all line segments are strictly a positive value, 
$$\overline{p_i^{\prime}q^{\prime}}^2 \geq \overline{p_i^{\prime}q}^2$$
$$\overline{p_i^{\prime}q^{\prime}}^2 + \overline{p_ip_i^\prime}^2 \geq \overline{p_i^{\prime}q}^2 + \overline{p_ip_i^\prime}^2$$
$$\overline{p_iq^{\prime}}^2 \geq \overline{p_iq}^2$$
by Pythagorean theorem. On the Euclidean plane without any weighted regions, obstacles, or other obtrusions, the minimum distance between two points is a straight line, and therefore 
$$d(p_{i}, q') \geq d(p_{i}, q)$$
must be true, concluding the proof. A visual example is provided by figures 1 and 9, where figure 1 represents the first reflection and figure 9 represents the second reflection. 
\end{proof}

\begin{figure}[h]
\begin{center}
\includegraphics[scale=.3]{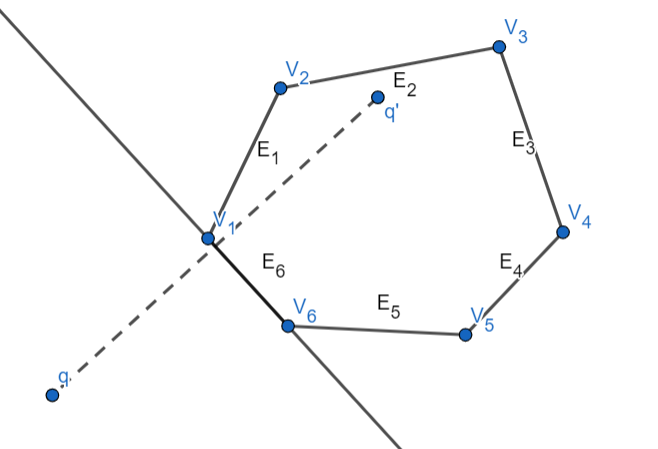}
\caption{Point $q := q^\prime$ mirrored over line $o$ which extends out from line segment $E_6$. New $q^\prime$ falls within the convex hull}
\end{center}
\end{figure}

\begin{theorem}
For some set of $N+1$ points $q$, $p_1$, $p_2$,$...$, $p_N \in \mathbb{R}^2$, 
$$\min \sum_{i=1}^{N}d(q, p_i)$$ 
occurs when $q$ falls in or along $C$, where $C$ is the convex hull formed by points $p_1$, $p_2$,$...$, $p_N$.
\end{theorem}

\begin{proof}
First assume point $q$ falls outside of the convex hull $C$. Select the two vertices on the exterior of the convex hull that are closest to $q$. Extend the line segment formed by these two vertices into line $o$. Call the reflection of $q$ across this line point $q^\prime$. By lemma 4.6, it is known that $d(p_{i}, q') \geq d(p_{i}, q)$. It follows that 
$$\sum_{i=1}^{N}d(p_{i}, q') \geq \sum_{i=1}^{N}d(p_{i}, q)$$
However, since the points of a convex hull can never be strictly co-linear, the points $p_1$, $p_2$,$...$, $p_N$ will never all fall on $o$, so there will always be some circumstance where $d(p_{i}, q') > d(p_{i}, q)$, therefore 
$$\sum_{i=1}^{N}d(p_{i}, q') > \sum_{i=1}^{N}d(p_{i}, q)$$
For each new point $q^{\prime}$, assume $q := q^\prime$. Then reiterate through this strategy to a new point $q^\prime$. This pattern repeats acyclically, as once $q^\prime$ mirrors over some edge $o$ and ends up inside the convex hull, 
$$\sum_{i=1}^{N}d(p_{i}, q') < \sum_{i=1}^{N}d(p_{i}, q)$$
is instead true as a result of $q^\prime$ being on the same side of $o$ as points $p_1$, $p_2$,$...$, $p_N$. After performing $q := q^\prime$, the new position of $q^\prime$ falls outside of the convex hull. This pattern repeats cyclically, and thus the point $q^\prime$ (or $q$) inside the convex hull serves as 
$$\min \sum_{i=1}^{N}d(q, p_i)$$ 
for all $q$ or $q^\prime$. Therefore, any points outside the convex hull has a point inside the convex hull that is closer to the set of points $p_1$, $p_2$,$...$, $p_N$, and thus, a contradiction arises, concluding the proof. 
\end{proof}

\subsection{Bounding for Algorithms in Region with Obstacles}

\begin{lemma}
For some set of points $p_1$, $p_2$,$...$, $p_N$ and set of polygons whose interior can be modeled by $\mathcal{P} = \{P_1, P_2, P_3,..., P_k\}$ that all fall on or on one side of some line $o$ in the Euclidean plane, the minimum sum of distances from a new point $q$ to the set of points $p_1$, $p_2$,$...$, $p_N$ such that 
$\forall P_j \in \mathcal{P}, P_j \sqcap p_i = \emptyset$ and $P_j \sqcap \ell(d(q, p_i)) = \emptyset$ for $i \in \llbracket 1, N \rrbracket$ must occur where $q$ is either on line $o$ or on the same side as the aforementioned set of points.
\end{lemma}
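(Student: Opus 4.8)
The plan is to mirror the structure of the obstacle-free case (Lemma 4.6 plus Theorem 4.7), replacing the Pythagorean-theorem argument with a more careful analysis of how a shortest \emph{obstacle-avoiding} path behaves under reflection. Let $q$ lie strictly on one side of the line $o$, with all of $p_1,\dots,p_N$ and all of the polygonal interiors $P_1,\dots,P_k$ lying on the other side of $o$ or on $o$ itself. Let $q'$ denote the reflection of $q$ across $o$. The goal is to show $d(p_i,q',P) \ge d(p_i,q,P)$ for each $i$, where here $d(\cdot,\cdot,\cdot)$ is the obstacle-avoiding distance (the interiors of the $P_j$ must be avoided); summing over $i$ and invoking the non-colinearity of the convex hull vertices (exactly as in Theorem 4.7) then gives the strict inequality needed to run the same reflect-and-repeat contradiction argument.

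The key step is the pointwise inequality $d(p_i,q',P) \ge d(p_i,q,P)$. First I would fix an optimal obstacle-avoiding path $\gamma'$ from $q'$ to $p_i$; its length is $d(p_i,q',P)$. Since $q'$ is on the opposite side of $o$ from $p_i$, the path $\gamma'$ must cross $o$ at least once; let $z$ be the first point (travelling from $q'$) at which $\gamma'$ meets $o$. Split $\gamma'$ as $\gamma'_1$ (from $q'$ to $z$) followed by $\gamma'_2$ (from $z$ to $p_i$). Now reflect the initial segment $\gamma'_1$ across $o$ to obtain a path $\widehat{\gamma_1}$ from $q$ to $z$ of the same length. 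The concatenation $\widehat{\gamma_1} \cdot \gamma'_2$ is a path from $q$ to $p_i$ of length exactly $\mathrm{len}(\gamma') = d(p_i,q',P)$. The crucial observation is that this concatenated path is \emph{still obstacle-avoiding}: $\gamma'_2$ avoids every $P_j$ because it is a sub-path of $\gamma'$, and $\widehat{\gamma_1}$ lies entirely in the closed half-plane bounded by $o$ that does \emph{not} contain any polygon interior (since $\gamma'_1$ lay in the opposite closed half-plane and reflection swaps them), so $\widehat{\gamma_1}$ cannot meet any $P_j$ either. Hence $d(p_i,q,P) \le \mathrm{len}(\widehat{\gamma_1} \cdot \gamma'_2) = d(p_i,q',P)$, which is the desired inequality.

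With the pointwise inequality in hand, the rest is a verbatim transcription of the Euclidean proof. Summing over $i$ gives $\sum_i d(p_i,q',P) \ge \sum_i d(p_i,q,P)$; because the convex hull of $p_1,\dots,p_N$ (and the polygons) has vertices that are never all colinear, at least one $p_i$ lies strictly off $o$ and its optimal path genuinely shortens, upgrading this to a strict inequality whenever $q$ is strictly outside the relevant closed half-plane. Reflecting $q$ across the near edge of the convex hull and iterating drives $q$ inside $C$ while strictly decreasing the objective, contradicting optimality of any $q$ outside $C$; this establishes that the minimizing $q$ lies on $o$ or on the polygon/point side of it, which is the statement of the lemma (and feeds directly into Theorem 4.9).

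The main obstacle I anticipate is the claim that $\widehat{\gamma_1}$ avoids all obstacle interiors. This rests on the hypothesis that \emph{every} $P_j$ lies on $o$ or on the far side of $o$ from $q$, so that the reflected half-plane containing $\widehat{\gamma_1}$ is obstacle-free in its interior; I should be careful about polygons that touch $o$ (their interiors still avoid the open near half-plane, and $\widehat{\gamma_1}$ can be taken to meet $o$ only at the single point $z$, so no interior crossing occurs). A secondary subtlety is ensuring the optimal obstacle-avoiding path $\gamma'$ actually exists and has finite length and a well-defined "first crossing" of $o$ — this is standard for polygonal obstacle environments (shortest paths are polygonal, with finitely many segments) and I would simply cite the continuous Dijkstra / shortest-path literature already referenced. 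Everything else is bookkeeping identical to the obstacle-free argument.
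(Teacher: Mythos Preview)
Your reflection argument has a genuine gap at the step you yourself flag as the ``main obstacle.'' Trace the half-planes carefully: for the optimal path $\gamma'$ from $q'$ to $p_i$ to be forced to cross $o$, you need $q'$ on the side of $o$ \emph{opposite} the $p_i$ --- that is, on the obstacle-free side. Then $\gamma'_1$ (the portion from $q'$ to the first crossing $z$) lies in the obstacle-free half-plane, and its reflection $\widehat{\gamma_1}$ lies in the half-plane that \emph{contains} the polygon interiors. Your sentence ``$\widehat{\gamma_1}$ lies entirely in the closed half-plane bounded by $o$ that does not contain any polygon interior'' is therefore exactly backwards, and the concatenated path $\widehat{\gamma_1}\cdot\gamma'_2$ need not be obstacle-avoiding. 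A concrete failure: if the reflection of $q'$ lands in the interior of some $P_j$, the constructed path starts inside an obstacle and the inequality you want simply does not hold. (There is also a labeling inconsistency in your setup --- you place $q$ on the side opposite the points but then argue as though $q'$ is on that side --- but even after repairing the labels the half-plane claim fails.)

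The paper avoids this by \emph{projecting} rather than reflecting. Assuming $q_1$ lies on the obstacle-free side, it drops a perpendicular to $o$ to obtain $q_2\in o$. For each $i$ the optimal path $\ell(d(q_1,p_i))$ meets $o$ at some point $C_i$, and the straight segment $q_1C_i$ lies in the empty half-plane. Then $d(C_i,q_2)<d(C_i,q_1)$ by leg--versus--hypotenuse, and both segments $\overline{q_2C_i}$ and $\overline{q_1C_i}$ stay in the closed empty half-plane (in particular $\overline{q_2C_i}\subset o$), so they are automatically obstacle-free. Chaining gives $d(p_i,q_2)\le d(p_i,C_i)+d(C_i,q_2)<d(p_i,C_i)+d(C_i,q_1)=d(p_i,q_1)$. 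The point is that projection keeps the comparison point on $o$, where no obstacle interior can lie; reflection throws it into the obstacle side, where you lose control. If you want to salvage your approach, replace the reflection by this perpendicular-foot construction and the rest of your outline goes through.
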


\begin{proof}
First assume a point $q_1$ falls on the opposite side of $o$ as points $p_1$, $p_2$,$...$, $p_N$ such that $$\sum_{i=1}^{N}d(p_i, q_1)$$
is minimized. Call the intersection of $\ell(d(p_i, q_1))$ and $o$ point $C_i$ for $i \in \llbracket 1, N \rrbracket$. Then, position a point $q_2$ on line $o$ such that $\overline{q_1q_2}\ \bot \ o$. It follows that 
$$\sum_{i=1}^{N}d(C_{i}, q_{2}) < \sum_{i=1}^{N}d(C_{i}, q_1)$$
as $d(C_{i}, q_{2})$ is a leg of a right triangle while $d(C_{i}, q_1)$ is the hypotenuse. Since $q_1 \sqcap o = \emptyset$, the other leg of this right triangle must traverse some distance, therefore demonstrating that $d(C_i, q_2) < d(C_i, q_1)$ for $i \in \llbracket 1, N \rrbracket$. 
Since $d(p_i, q_2) \leq d(p_i, C_i, q_2)$ (as $C_i$ either falls along the path or serves as a detour), $d(p_i, q_2) \leq d(p_i, C_i, q_2) = d(p_i, C_i) + d(C_i, q_2) < d(p_i, C_i) + d(C_i, q_1) = d(p_i, q_1)$ 
for $i \in \llbracket 1, N \rrbracket$. Therefore

$$\sum_{i=1}^{N}d(p_i, q_{2}) < \sum_{i=1}^{N}d(p_i, q_1)$$
and thus a contradiction arises, concluding the proof. 
\end{proof}

\begin{theorem}
For a set of polygons whose interiors can be modeled by $\mathcal{P} = \{P_1, P_2, P_3,..., P_k\}$ and $N+1$ points $q$, $p_1$, $p_2$,$...$, $p_N$ where 
$\forall P_j \in \mathcal{P}, P_j \sqcap p_i = \emptyset$ and $P_j \sqcap \ell(d(C_i, p_i)) \neq \emptyset$  for $i \in \llbracket 1, N \rrbracket$, 
$$\min \sum_{i=1}^{N}d(p_{i}, q)$$ 
must occur when $q$ falls in or along $C$, where $C$ is the convex hull entrapping all points $p_1$, $p_2$,$...$, $p_N$, and the set of polygons $\mathcal{P}$.
\end{theorem}

\begin{proof}
Assume the convex hull $C = (V, E)$ has edges and vertices labeled clockwise cyclically from $V_1$, $V_2$,$...$, $V_N$, $V_{N+1}$ where $V_{N+1}=V_1$ and edges $E_i$ between $V_i$ and $V_{i+1}$. For each $V_i$ where $i \in \llbracket 1, N \rrbracket$, draw lines outward that are perpendicular to $E_i$, $E_{i+1}$ at $V_i$. Assume some point $q_1$ does not fall within or on an edge of $C$. Then, $q_1$ must fall within one of four spaces (shown in figure 10): \\

\begin{figure}[h]
\begin{center}
\includegraphics[scale=.46]{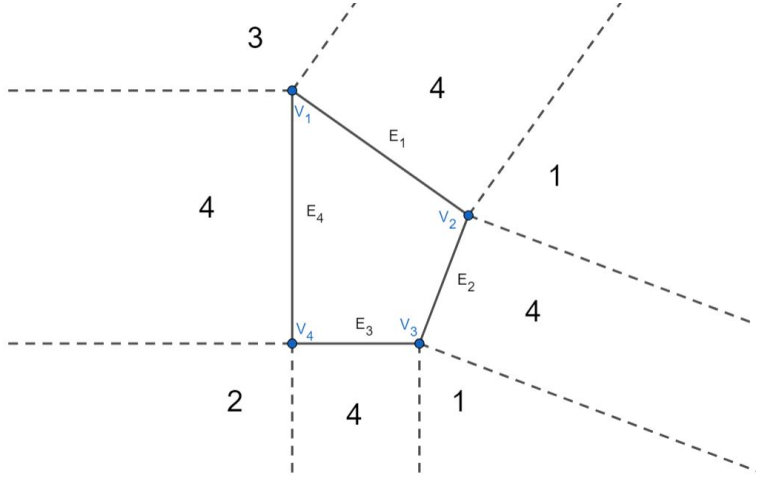}
\caption{Parsing of area outside convex hull into 4 cases}
\end{center}
\end{figure}

Case 1: $q_1$ falls within an area formed by an acute angle \\
Assume the point and edges nearest to $q_1$ are $V_i$ and $E_i$, $E_{i+1}$. Extend edges $E_i$, $E_{i+1}$ outward. Without loss of generality position a point $q_2$ along the extension of $E_{i+1}$ where $\overline{q_1q_2}\ \bot \ E_{i+1}$. Since all points fall inside or along the convex hull (and thus on the other side of $E_{i+1}$), lemma 4.10 can then be applied. This process can be iterated again to a point $q_3$ that falls along $E_i$ (since $\overline{q_1q_2}\ \bot E_{i}$ and all points fall on the other side of $E_i$, lemma 4.10 can again be applied). It follows that 
$$\sum_{i=1}^{N}d(p_i, q_{3}) < \sum_{i=1}^{N}d(p_i, q_2) < \sum_{i=1}^{N}d(p_i, q_1)$$
with $q_3$ being along the convex hull.\\
Case 2: $q_1$ falls within an area formed by a right angle \\
A similar strategy to Case 1 can be applied, with $V_i$ and $E_i$, $E_{i+1}$ being the closest vertex and edges to $q_1$, respectively. Without loss of generality, assume a point $q_2$ falls along $E_i$ such that $\overline{q_1q_2}\ \bot \ E_{i}$. Then move along the extension of $E_i$ to a point $q_3$ that is located at $V_i$. This holds true since $E_i$ is perpendicular to $E_{i+1}$. In both scenarios, lemma 4.10 proves that both these movements are strictly decreasing 
$$\sum_{i=1}^{3}d(p_i, q_{i})$$ \\
Case 3: $q_1$ falls within an area formed by an obtuse angle \\
Call the nearest vertex to $q_i$ $V_i$ and the nearest edges $E_i$, $E_{i+1}$. Extend line segments $E_i$, $E_{i+1}$ outward to $o_i$, $o_{i+1}$ respectively. If $q_1$ now falls between these two extensions, without loss of generality, call a point $q_2$ on $o_i$ such that $\overline{q_1q_2}\ \bot \ o_{i}$. If $q_1$ does not fall between $o_i$, $o_{i+1}$, determine whether is closer to $o_i$ or $o_{i+1}$. If $q_1$ falls closer to $o_{i+1}$, place point $q_2$ on $o_i$ such that $\overline{q_1q_2}\ \bot \ o_{i}$. Alternatively, if $q_1$ falls closer to $o_{i}$, create a line through $q_1$ orthogonal to $o_{i+1}$. Call the intersection of this line and $o_i$ point $q_2$. Call a point $q_3$ on $o_{i+1}$ such that $\overline{q_2q_3}\ \bot \ o_{i+1}$. Continuously reiterate through this process to a new point $q_j$ that follows the piece-wise function
\[
  F(i) =
  \begin{cases} 
                                   \overline{q_{j-1}q_{j}}\ \bot \ o_{i} & \text{if $ j \equiv 0\ (mod\ 2)$} \\
                                   \overline{q_{j-1}q_{j}}\ \bot \ o_{i+1} & \text{if $ j \equiv 1\ (mod\ 2)$} \\
  \end{cases}
\]
These iterations and steps proposed earlier all follow lemma 4.10 as the set of points $p_1$, $p_2$,$...$, $p_N$ all fall on the opposite side or along the line that is being moved towards perpendicularly. Therefore,
$$\lim_{j\to\infty} q_j = V_i$$
Case 4: $q_1$ falls within an area enclosed by 2 right angles or along a boundary between two areas \\
Call the edge nearest to $q_1$ $E_i$. Position $q_2$ along $E_i$ (and thus on  the convex hull) such that $\overline{q_1q_2}\ \bot \ E_{i}$. lemma 4.10 proves that 
$$\sum_{i=1}^{N}d(p_i, q_{2}) < \sum_{i=1}^{N}d(p_i, q_{1})$$ \\

By proving that no point $q_1$ outside the convex hull could minimize the distance to points $p_1$, $p_2$,$...$, $p_N$ as compared to a point either on or inside the convex hull, a contradiction arises, concluding the proof. 
\end{proof}

\subsection{Bounding for Algorithms in Weighted Regions}

\begin{lemma}
For some finite region $\in \mathbb{R}^2$ parsed into $k$ distinct polygons $\mathcal{P} = \{P_1, P_2, P_3,..., P_k\}$ each with an independent weighting $w(k)$ and $N+1$ points $q$, $p_1$, $p_2$,$...$, $p_N$, such that $p_1$, $p_2$,$...$, $p_N$ and $\mathcal{P}$ fall on the same side or on some line $o$ in the Euclidean plane,
$$\min \sum_{i=1}^{N}w(p_i, q)$$
occurs when $q$ falls in or on $C$ where C is the space of the convex hull entrapping all points $p_1$, $p_2$,$...$, $p_N$, and the polygons $\mathcal{P}$.
\end{lemma}

\begin{proof}
lemma 4.10 can be modified, where $w(q, p_i)$ is similar conceptually to $d(q, p_i)$ for $i \in \llbracket 1, N \rrbracket$. Likewise, call $C_i$ the intersection of $\ell(w(q, p_i))$ and $o$. Furthermore, position a point $q_2$ on $o$ such that $\overline{q_1q_2}\ \bot \ o$. Triangles can then be constructed between points $q_1$, $q_2$, and $C_i$ for $i \in \llbracket 1, N \rrbracket$. Each right triangle $\bigtriangleup C_iq_1q_2$ has hypotenuse $\overline{q_1C_i}$ and legs $\overline{q_2q_1}$, $\overline{q_2C_i}$. Based on the properties of right triangles, it is known that $\overline{q_1C_i} > \overline{q_2C_i}\ \forall C_i$. For all lengths that fall in non-weighted space, $\ell(d(x, y)) = \overline{xy}$. Since all sides of triangle $\bigtriangleup C_iq_1q_2$ fall in non-weighted space, 
$$\sum_{i=1}^{N}d(q_1, C_i) > \sum_{i=1}^{N}d(q_2, C_i)$$
It logically follows that
$$\sum_{i=1}^{N}d(q_1, C_i) + w(C_i, p_i) > \sum_{i=1}^{N}d(q_2, C_i) + w(C_i, p_i)$$
$$\sum_{i=1}^{N}w(q_1, p_i) > \sum_{i=1}^{N}d(q_2, C_i) + w(C_i, p_i)$$
where $\ell(w(C_i, p_i)), \ell(d(q_1, C_i))$ are both parts of the path $\ell(w(q_1, p_i))$. 
Furthermore, 
$$\sum_{i=1}^{N}d(q_2, C_i) + w(C_i, p_i) \geq \sum_{i=1}^{N} w(q_2, p_i)$$
as $\ell((q_2, C_i)) + \ell(w(C_i, p_i))$ always traverses through point $C_i$, serving as a possible detour from $\ell(w(q_2, p_i))$.
Therefore, 
$$\sum_{i=1}^{N}w(q_1, p_i) > \sum_{i=1}^{N} w(q_2, p_i)$$
and thus a contradiction arises, proving that no point $q$ on the opposite side of line $o$ as points $p_1$, $p_2$,$...$, $p_N$ and $\mathcal{P}$ in the Euclidean plane can fulfill
$$\min \sum_{i=1}^{N}w(p_i, q).$$
\end{proof}

\begin{theorem}
For some finite region $\in \mathbb{R}^2$ parsed into $k$ polygons $\mathcal{P} = \{P_1, P_2, P_3,..., P_k\}$ each with an independent weighting $w(k)$ and $N+1$ points $q$, $p_1$, $p_2$,$...$, $p_N$, 
$$\min \sum_{i=1}^{N}w(p_i, q)$$
occurs when $q$ falls in or along $C$, where C is the space of the convex hull entrapping all points $p_1$, $p_2$,$...$, $p_N$, and the set of polygons $\mathcal{P}.$
\end{theorem}

\begin{proof}
Similarly to theorem 4.11, assume the convex hull $C = (V, E)$ has edges and vertices labeled clockwise cyclically from $V_1$, $V_2$,$...$, $V_N$, $V_{N+1}=V_1$ and edges $E_i$ between $V_i$ and $V_{i+1}$. For each $V_i$ where $i \in \llbracket 1, N \rrbracket$, draw lines outward such that they are perpendicular to $E_i$, $E_{i+1}$ at $V_i$. Assume some point $q_1 \not\in C$. Then, $q_1$ must fall within one of four spaces as stated in theorem 4.11. All procedures in all 4 cases can likewise be repeated, however instead of applying lemma 4.10, lemma 4.13 can instead be substituted in, and by following each case described in theorem 4.11, a similar result can thus be achieved.
\end{proof}

\section{Star Topology Weighted Regions}

\begin{lemma}
Introduction of Weiszfeld's algorithm as an alternative to the algorithm proposed in section 5. \\
Call some initial point inside square $s_i$ $q_1$. Continuously iterating through
$$q_{i+1} = \left( \sum_{n=1}^{8} \frac{w(n) \cdot c(\theta_n)}{\lVert q_i - c(\theta_n) \rVert} \right) \Bigg/ \left( \sum_{n=1}^{8} \frac{w(n)}{\lVert q_i - c(\theta_n) \rVert} \right)$$
will approach the geometric median $q$: 
$$ \lim_{i \to \infty} q_i = q.$$
\end{lemma}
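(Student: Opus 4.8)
The plan is to recognize this iteration as the classical Weiszfeld fixed‑point scheme for the weighted geometric median of the eight anchors $c(\theta_1),\dots,c(\theta_8)$ with weights $w(1),\dots,w(8)$, and to establish convergence by the standard majorize–minimize argument combined with compactness. Write $c_n:=c(\theta_n)$ for brevity, $f(q)=\sum_{n=1}^{8} w(n)\lVert q-c_n\rVert$, and let $T$ denote the map $q\mapsto q_{i+1}$ given by the displayed quotient. First I would record two elementary facts. (i) $f$ is convex and continuous, and strictly convex provided the $c_n$ are not all collinear; since $s_i$ is compact, $f$ attains its minimum on $s_i$, and in the non‑degenerate case this minimizer is the unique geometric median $q$. (ii) Wherever $q\notin\{c_1,\dots,c_8\}$, $f$ is differentiable with $\nabla f(q)=\sum_n w(n)\,(q-c_n)/\lVert q-c_n\rVert$, and setting $\nabla f(q)=0$ and solving for $q$ gives exactly $q=T(q)$; hence fixed points of $T$ away from the anchors coincide with minimizers of $f$.

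Next I would set up the surrogate. For fixed $q\notin\{c_1,\dots,c_8\}$ put $\beta_n=w(n)/\lVert q-c_n\rVert$ and define the convex quadratic $g(x)=\tfrac12\sum_n\beta_n\lVert x-c_n\rVert^2+\tfrac12 f(q)$. Two checks carry the argument: applying the AM–GM bound $\sqrt a\le\tfrac{a+b}{2\sqrt b}$ with $a=\lVert x-c_n\rVert^2$, $b=\lVert q-c_n\rVert^2$ and summing against $w(n)$ shows $f(x)\le g(x)$ for all $x$, with equality at $x=q$; and minimizing the quadratic $g$ (differentiate, set to zero) gives $\argmin_x g(x)=\big(\sum_n\beta_n c_n\big)/\big(\sum_n\beta_n\big)=T(q)$, precisely the Weiszfeld update. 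Combining, $f(T(q))\le g(T(q))\le g(q)=f(q)$, and since $g$ is a strictly convex quadratic the middle inequality is strict unless $T(q)=q$. This is the strict descent property: $f(q_{i+1})<f(q_i)$ unless $q_i$ is already a fixed point. A short separate check confirms $T(q)$ is a convex combination of the $c_n$ (coefficients $\beta_n/\sum_m\beta_m\ge 0$ summing to $1$), so the whole orbit stays in $s_i$.

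The convergence argument then proceeds in the usual way: $\{f(q_i)\}$ is non‑increasing and bounded below, hence converges to some $L$; the orbit lies in the compact set $s_i$, so some subsequence $q_{i_k}\to q^\star$. If $q^\star$ is not one of the anchors, $T$ is continuous near $q^\star$, so $f(q_{i_k+1})=f(T(q_{i_k}))\to f(T(q^\star))$, while also $f(q_{i_k+1})\to L=f(q^\star)$; strict descent then forces $T(q^\star)=q^\star$, so $q^\star$ is the geometric median $q$. Because the minimizer is unique (strict convexity on the non‑degenerate hull), every subsequential limit equals $q$, and a sequence in a compact set all of whose subsequential limits are $q$ must converge to $q$; hence $\lim_{i\to\infty}q_i=q$.

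The main obstacle is the degenerate case, well known in Weiszfeld analysis, in which some iterate $q_i$—or the minimizer itself—coincides with an anchor $c_n$, where $T$ is undefined and $f$ fails to be differentiable. The minimizer‑at‑anchor possibility is settled by the subgradient optimality test ($q=c_m$ is optimal iff $\big\lVert\sum_{n\ne m}w(n)(c_m-c_n)/\lVert c_m-c_n\rVert\big\rVert\le w(m)$); for starting points whose orbit eventually hits an anchor one invokes the classical fact (Kuhn, Ostresh) that such points form a Lebesgue‑null set, which is exactly the content hinted at in the Remark preceding the lemma. For a self‑contained statement one may simply add the hypotheses that $q_1$ lies off this exceptional set and that the $c_n$ are in general position; and if all eight anchors happen to be collinear, the conclusion should be read as convergence to the (compact, convex) set of minimizers. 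A secondary point deserving one line is the justification of strict convexity/uniqueness of $q$, which is precisely what the non‑collinearity assumption supplies.
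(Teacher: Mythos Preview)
Your argument is correct and essentially self-contained: the majorize--minimize surrogate, the strict descent inequality $f(T(q))\le g(T(q))<g(q)=f(q)$ unless $q$ is fixed, and the compactness/subsequence argument together yield convergence, and you handle the anchor-coincidence and collinearity degeneracies appropriately.

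The paper, however, does not prove convergence from scratch. Its argument is a two-line reduction: it cites the classical convergence theorem for the \emph{unweighted} Weiszfeld iteration (Eckhardt, Kuhn) and then observes that, because in this setting each $w(n)$ is a nonnegative integer (the number of paths $\ell(w_a(p_i,q))$ passing through the $n$-th successor node), a weight of $w(n)$ is equivalent to stacking $w(n)$ copies of the anchor $c(\theta_n)$ at the same location. The weighted update is then literally the unweighted update applied to this multiset, i.e.
\[
q_{i+1}=\Biggl(\sum_{n=1}^{8}\sum_{j=1}^{w(n)}\frac{c(\theta_n)}{\lVert q_i-c(\theta_n)\rVert}\Biggr)\Big/\Biggl(\sum_{n=1}^{8}\sum_{j=1}^{w(n)}\frac{1}{\lVert q_i-c(\theta_n)\rVert}\Biggr),
\]
which collapses to the stated formula. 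What the paper's route buys is brevity and a direct appeal to the literature; what yours buys is generality (arbitrary nonnegative real weights, not just integers) and independence from external citations, at the cost of more work. Either is acceptable here, but if you want to match the paper you should replace the MM machinery with the repeated-points reduction and a citation.
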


\begin{proof}
Begin by stating a generalized Weiszfeld's algorithm \cite{Eckhardt_1980}
$$q_{i+1} = \left( \sum_{n=1}^{N} \frac{c(\theta_n)}{\lVert q_i - c(\theta_n) \rVert} \right) \Bigg/ \left( \sum_{n=1}^{N} \frac{1}{\lVert q_i - c(\theta_n) \rVert} \right)$$
Which is known to converge towards the geometric median so long as $q_i \not= c(\theta_n)\  \forall i$ and for $n \in \llbracket 1, 8 \rrbracket$ \cite{Love2000}. Because the initial position of $q_1$ must fall inside $s_i$, $q_i \not= c(\theta_n)$ (although it may converge towards $c(\theta_n)$, it will never hit it even if $c(\theta_n)$ is the position of $q$) \cite{Love2000}.
The weighting of a specific point (which is calculated by the number of 
$\ell(w_a(p_i, q))$ that pass through $c(\theta_n)$ for $n \in \llbracket 1, 8 \rrbracket$) is similar to assuming multiple points fall at exactly the same coordinates. $w(n)$ therefore represents the number of $\ell(w_a(p_i, q))$ that pass through $c(\theta_n)$ for each $n \in \llbracket 1, 8 \rrbracket$. Therefore, Weiszfeld's algorithm thus becomes 
$$q_{i+1} = \left( \sum_{n=1}^{N} \sum_{j=1}^{w(n)} \frac{c(\theta_n)}{\lVert q_i - c(\theta_n) \rVert} \right) \Bigg/ \left( \sum_{n=1}^{N} \sum_{j=1}^{w(n)} \frac{1}{\lVert q_i - c(\theta_n) \rVert} \right)$$
which can also be expressed as 
$$q_{i+1} = \left( \sum_{n=1}^{8} \frac{w(n) \cdot c(\theta_n)}{\lVert q_i - c(\theta_n) \rVert} \right) \Bigg/ \left( \sum_{n=1}^{8} \frac{w(n)}{\lVert q_i - c(\theta_n) \rVert} \right).$$
\end{proof}

\begin{lemma}
The number of iterations needed to calculate the target accuracy (represented by some variable $\epsilon \propto n_1$) can be modeled by
$$\ceil*{\frac{3\log{2}-2\log{\epsilon_c}}{2\log{\left[\sqrt{\ceil*{\frac{l}{n_1}} \cdot \ceil*{\frac{m}{n_1}}}\right]}-2\log{2}}}+1$$
where $\epsilon_c = \frac{\epsilon}{n_1}$, $\epsilon$ is the target accuracy, and $\left[\sqrt{\ceil*{\frac{l}{n_1}} \cdot \ceil*{\frac{m}{n_1}}}\right]$ represents the number of squares after iteration $y+1$ in $s_y\ \forall y$ (this technique does not take into account the possible utilization of Weiszfeld's algorithm). 
\end{lemma}

\begin{proof}
After the first iteration square $s_1$ with side length $2n_1$ is created. The maximum inaccuracy of this square is therefore $2n_1\sqrt{2}$. The square $s_2$ with side length $n_2 := \frac{2n_1}{\left[\sqrt{\ceil*{\frac{l}{n_1}} \cdot \ceil*{\frac{m}{n_1}}}\right]}$ is then created, with a maximum possible inaccuracy of $\frac{4n_1\sqrt{2}}{\left[\sqrt{\ceil*{\frac{l}{n_1}} \cdot \ceil*{\frac{m}{n_1}}}\right]}$. As each new square $s_i$ is created, the inaccuracy can be modeled by $\frac{2^in_1\sqrt{2}}{\left[\sqrt{\ceil*{\frac{l}{n_1}} \cdot \ceil*{\frac{m}{n_1}}}\right]^{i-1}}$. This grows exponentially, so to normalize this change, logs are applied. For the proof, We substitute $s_c$ for the square count $\left[\sqrt{\ceil*{\frac{l}{n_1}} \cdot \ceil*{\frac{m}{n_1}}}\right]$. Each $\epsilon$ must fulfill the condition
$$ \frac{2^{i+1}n_1\sqrt{2}}{s_c^{i}} \leq \epsilon < \frac{2^in_1\sqrt{2}}{s_c^{i-1}}$$
and $\epsilon_c = \frac{\epsilon}{n_1}$,
$$ \frac{2^{i+1}\sqrt{2}}{s_c^{i}} \leq \epsilon_c < \frac{2^i\sqrt{2}}{s_c^{i-1}}$$
$$ \log{\frac{2^i \cdot 2\sqrt{2}}{s_c^{i}}} \leq \log{\epsilon_c} < \log{\frac{2^{i-1} \cdot 2\sqrt{2}}{s_c^{i-1}}}$$
$$ \log{2^{3/2} + i(\log{2} - \log{s_c})} \leq \log{\epsilon_c} < \log{2^{3/2} + (i-1)(\log{2} - \log{s_c})}$$
$$ \frac{3}{2}\log{2}-\log{\epsilon_c} \leq i(\log{s_c}-\log{2}) < \frac{3}{2}\log{2}-\log{\epsilon_c} + \log{s_c} - \log{2}$$
$$ 3\log{2}-2\log{\epsilon_c} \leq 2i(\log{s_c}-\log{2}) < 3\log{2}-2\log{\epsilon_c} + 2\log{s_c} - \log{2}$$
$$ \frac{3\log{2}-2\log{\epsilon_c}}{2\log{s_c} - \log{2}} \leq i < \frac{3\log{2}-2\log{\epsilon_c}}{2\log{s_c} - \log{2}} + 1$$
However, for whatever value $i$ is, it must round up to the nearest integer (to make sure that target accuracy is at minimum met), and thus the number of iterations necessary is
$$\ceil*{\frac{3\log{2}-2\log{\epsilon_c}}{2\log{\left[\sqrt{\ceil*{\frac{l}{n_1}} \cdot \ceil*{\frac{m}{n_1}}}\right]}-2\log{2}}}+1$$
where 1 can stay outside the ceiling function since adding/subtracting whole numbers does not change the rounding off of the ceiling function.
\end{proof}

\section{Star Topology Region with Obstacles}

\begin{lemma}
The number of iterations needed to calculate the accuracy of $\epsilon$ can be determined by
$$\ceil*{\frac{3\log{2}-2\log{\epsilon_c}}{2\log{\left[\rule{0cm}{.3cm} \sqrt{r} \right]\rule{0cm}{.3cm}}-2\log{2}}}+1$$
where $\epsilon_c = \frac{\epsilon}{n}$, $\epsilon$ is the target accuracy and $\left[\rule{0cm}{.3cm} \sqrt{r} \right]\rule{0cm}{.3cm}^2$ represents the number of squares after iteration $y+1$ in $s_y\ \forall y$.
\end{lemma}

\begin{proof}
This proof closely mirrors the one outlined in Lemma B.2. After the first iteration the square $s_1$ of side length $2n$ is created. The maximum possible inaccuracy in this square is thus $2n\sqrt{2}$. After the second iteration, the square $s_2$ has side length $\frac{4n}{\left[\rule{0cm}{.3cm} \sqrt{r} \right]\rule{0cm}{.3cm}}$ and maximum possible inaccuracy of $\frac{4n\sqrt{2}}{\left[\rule{0cm}{.3cm} \sqrt{r} \right]\rule{0cm}{.3cm}}$. As each new square $s_i$ is created, the inaccuracy can be modeled by $\frac{2^{i}n\sqrt{2}}{[\sqrt{r}]^{i-1}}$. This grows exponentially, so to normalize this change, logs must be applied. Every $\epsilon$ must fulfill two conditions: $\epsilon_c = \frac{\epsilon}{n}$ and
$$ \frac{2^{i+1}n\sqrt{2}}{\left[\rule{0cm}{.3cm} \sqrt{r} \right]\rule{0cm}{.3cm}^{i}} \leq \epsilon < \frac{2^in\sqrt{2}}{\left[\rule{0cm}{.3cm} \sqrt{r} \right]\rule{0cm}{.3cm}^{i-1}}$$
$$ \frac{2^{i+1}\sqrt{2}}{\left[\rule{0cm}{.3cm} \sqrt{r} \right]\rule{0cm}{.3cm}^{i}} \leq \epsilon_c < \frac{2^i\sqrt{2}}{\left[\rule{0cm}{.3cm} \sqrt{r} \right]\rule{0cm}{.3cm}^{i-1}}$$
$$ \log{\frac{2 \cdot 2^i\sqrt{2}}{\left[\rule{0cm}{.3cm} \sqrt{r} \right]\rule{0cm}{.3cm}^{i}}} \leq \log{\epsilon_c} < \log{\frac{2 \cdot 2^{i-1}\sqrt{2}}{\left[\rule{0cm}{.3cm} \sqrt{r} \right]\rule{0cm}{.3cm}^{i-1}}}$$
$$ \log{2^{3/2} + i\log{2} - i\log{\left[\rule{0cm}{.3cm} \sqrt{r} \right]\rule{0cm}{.3cm}}} \leq \log{\epsilon_c} < \log{2^{3/2} + (i-1)\log{2} - (i-1)\log{\left[\rule{0cm}{.3cm} \sqrt{r} \right]\rule{0cm}{.3cm}}}$$
$$ \frac{3}{2}\log{2}-\log{\epsilon_c} \leq i(\log{\left[\rule{0cm}{.3cm} \sqrt{r} \right]\rule{0cm}{.3cm}} - \log{2})< \frac{3}{2}\log{2}-\log{\epsilon_c} + \log{\left[\rule{0cm}{.3cm} \sqrt{r} \right]\rule{0cm}{.3cm}} - \log{2}$$
$$ 3\log{2}-2\log{\epsilon_c} \leq 2i(\log{\left[\rule{0cm}{.3cm} \sqrt{r} \right]\rule{0cm}{.3cm}} - \log{2}) < 3\log{2}-2\log{\epsilon_c} + 2(\log{\left[\rule{0cm}{.3cm} \sqrt{r} \right]\rule{0cm}{.3cm}} - \log{2})$$
$$ \frac{3\log{2}-2\log{\epsilon_c}}{2\log{\left[\rule{0cm}{.3cm} \sqrt{r} \right]\rule{0cm}{.3cm}} - 2\log{2}}  \leq i < \frac{3\log{2}-2\log{\epsilon_c}}{2\log{\left[\rule{0cm}{.3cm} \sqrt{r} \right]\rule{0cm}{.3cm}} - 2\log{2}} + 1$$
However, iterations can only be integers, thus
$$\ceil*{\frac{3\log{2}-2\log{\epsilon_c}}{2\log{\left[\rule{0cm}{.3cm} \sqrt{r} \right]\rule{0cm}{.3cm}} - 2\log{2}}} \leq i < \ceil*{\frac{3\log{2}-2\log{\epsilon_c}}{2\log{\left[\rule{0cm}{.3cm} \sqrt{r} \right]\rule{0cm}{.3cm}} - 2\log{2}}}+1$$
1 can stay outside the ceiling function since adding/subtracting whole numbers does not affect the rounding off of the ceiling function. Since the accuracy given must always be higher than the target accuracy, 
$$i = \ceil*{\frac{3\log{2}-2\log{\epsilon_c}}{2\log{\left[\rule{0cm}{.3cm} \sqrt{r} \right]\rule{0cm}{.3cm}} - 2\log{2}}}+1$$
\end{proof}

\subsection{Complexity Analysis}

\begin{lemma}
The Star Topology Obstacles algorithm given some $\epsilon$ target accuracy in $\mathbb{R}^2$ runs in $O(n\log{n}\log{\frac{1}{\epsilon_c}})$ time
\end{lemma}

\begin{proof}
After the first iteration, the number of test points being put through the continuous Dijkstra paradigm has an upper bound of $\left[\rule{0cm}{.3cm} \sqrt{r} \right]\rule{0cm}{.3cm}^2= 4^2 = 16$. As the number of iterations increases, the $\frac{\textnormal{number of test points}}{\textnormal{total trials}}$ continues to maintain an upper bounding of 16. Each test point runs in $O(n \log{n})$ time through the continuous Dijkstra paradigm \cite{latexcompanion}, with $i$ iterations needed to reach some level of accuracy $\epsilon$ where
$$i = \ceil*{\frac{3\log{2}-2\log{\epsilon_c}}{2\log{\left[\rule{0cm}{.3cm} \sqrt{r} \right]\rule{0cm}{.3cm}} - 2\log{2}}}+1.$$
For target accuracy values that are extremely precise, the ceiling function and all constants can be disregarded (note that since $\left[\rule{0cm}{.3cm} \sqrt{r} \right]\rule{0cm}{.3cm} = 4$, it is also a constant), leaving the expression $-\log{\epsilon_c}$ which can also be expressed as $\log{\frac{1}{\epsilon_c}}$. Multiplying the number of total iterations necessary to compute the target accuracy, the number of test points per square, and the algorithm running from each test point proves the algorithm has an upper bound runtime of $O(n\log{n}\log{\frac{1}{\epsilon_c}})$ in $\mathbb{R}^2$ (when disregarding the multiplication of the constant 16).
\end{proof}

\end{document}